\def\BibTeX{{\rm B\kern-.05em{\sc i\kern-.025em b}\kern-.08em
    T\kern-.1667em\lower.7ex\hbox{E}\kern-.125emX}}
\begin{document}
\newtheorem{theorem}{Theorem}
% [section]
\newtheorem{fact}[theorem]{Fact}
\newtheorem{proposition}[theorem]{Proposition}
\newtheorem{observation}[theorem]{Observation}
\newtheorem{lemma}[theorem]{Lemma}
\newtheorem{definition}[theorem]{Definition}
\newtheorem{corollary}[theorem]{Corollary}
\newtheorem{remark}[theorem]{Remark}
\newtheorem{claim}[theorem]{Claim}
\newtheorem{conjecture}[theorem]{Conjecture}
\newtheorem{assumption}[theorem]{Assumption}
\newtheorem{property}[theorem]{Property}

\newcommand{\qed}{\hfill \ensuremath{\Box}}

\newenvironment{proof}{
	\vspace*{-\parskip}\noindent\textit{Proof.}}{$\qed$
	
	\medskip
}

\newcommand{\alg}[1]{\mathsf{#1}}
\newcommand{\Prover}{\alg{P}}
\newcommand{\Verifier}{\alg{V}}
\newcommand{\Simulator}{\alg{S}}
\newcommand{\PPT}{\alg{PPT}}
\newcommand{\isom}{\cong}
\newcommand{\from}{\stackrel{\scriptstyle R}{\leftarrow}}
\newcommand{\handout}[5]{
	\noindent
	\begin{center}
		\framebox{
			\vbox{
				\hbox to 5.78in { {\bf Hybrid Systems} \hfill #2 }
				\vspace{4mm}
				\hbox to 5.78in { {\Large \hfill #5  \hfill} }
				\vspace{2mm}
				\hbox to 5.78in { {\it #3 \hfill #4} }
			}
		}
	\end{center}
	\vspace*{4mm}
}

\newcommand{\ho}[5]{\handout{#1}{#2}{Guide:
		#3}{#4}{#5}}
\newcommand{\al}{\alpha}
\newcommand{\nd}{\wedge}
\newcommand{\defn}{\coloneqq}
\newcommand{\Z}{\mathbb Z}
\newcommand{\real}{\mathbb{R}}
\newcommand{\nat}{\mathbb{N}}
\newcommand{\ppcd}{\mathcal{H}}
\newcommand{\reduce}[1]{#1^{red}}
\newcommand{\loc}{Q}
\newcommand{\state}{\mathcal{X}}
\newcommand{\poly}[1]{Poly(#1)}
\newcommand{\inv}{Inv}
\newcommand{\flow}{Flow}
\newcommand{\guard}{Guard}
\newcommand{\eguard}{\mathcal{G}}
\newcommand{\edges}{Edges}
\newcommand{\dist}[1]{\text{Dist}(#1)}
\newcommand{\itstar}{\item[$\bigstar$]}
\newcommand{\vertx}{V}
\newcommand{\gredge}{E}
\newcommand{\ewt}{W}
\newcommand{\ec}{E_c}
\newcommand{\ep}{E_p}
\newcommand{\graph}{G}
\newcommand{\initnode}{I_0}
\newcommand{\initst}{s_{\textit{init}}}
\newcommand{\stablest}{s_0}
\newcommand{\pf}{P} %probability function execution graph
\newcommand{\Path}{\sigma}
\newcommand{\infpath}{\sigma_{\infty}}
\newcommand{\len}[1]{\text{len}(#1)}
\newcommand{\indx}{I}
\newcommand{\union}{\cup}
\newcommand{\bigunion}{\bigcup}
\newcommand{\intersect}{\cap}
\newcommand{\bigintersect}{\bigcap}
\newcommand{\ball}[2]{B_{#1}(#2)}
\newcommand{\prob}{P}
\newcommand{\syntaxpr}{Pr}
\newcommand{\probpath}[2]{P_{#1}(#2)}
\newcommand{\tildeprob}{\tilde{Pr}}
\newcommand{\fpath}[1]{\text{Paths}_{\textit{fin}}(#1)}
\newcommand{\cyl}[1]{Cyl(#1)}
\newcommand{\apath}[1]{\text{Paths}(#1)}
\newcommand{\ipath}[1]{\text{Paths}(#1)}
\newcommand{\spath}{\mathcal{SP}}
\newcommand{\scyl}{\mathcal{SC}}
\newcommand{\edgeset}{\mathcal{E}}
\newcommand{\sigal}{\mathcal{F}}
\newcommand{\states}{S}
\newcommand{\ctran}{\rightarrow_c}
\newcommand{\ptran}{\rightarrow_p}
\newcommand{\tran}{R}
\newcommand{\stran}{S_\rightarrow}
\newcommand{\sstran}[1]{{#1}_{\rightarrow}}
\newcommand{\metric}{d}
\newcommand{\weight}{W}
\newcommand{\M}{\mathcal{M}}
\newcommand{\restrict}[2]{#1|#2}
\newcommand{\wdtmc}{\mathcal{M}_W}
\newcommand{\boundary}[1]{\partial(#1)}
\newcommand{\norm}[2]{\lvert\!\lvert #1 \rvert\!\rvert_{#2}}
\newcommand{\pathset}[3]{\Sigma_{#1}^{#2}#3}
\newcommand{\partition}{\mathcal{P}}
\newcommand{\size}[1]{\vert #1\vert}
\newcommand{\abtrct}[2]{#1/#2}
\newcommand{\face}{\mathbb{F}}
\newcommand{\cycle}{\mathcal{C}}
\newcommand{\wld}{S_\sigma}
\newcommand{\ldecomp}[1]{#1^{\mathcal{L}}}
\newcommand{\fdecomp}[1]{#1^d}
\newcommand{\pcdecomp}[1]{#1^{\mathcal{SP}\union\mathcal{SC}}}
\newcommand{\tildewt}{\tilde{W}}
\newcommand{\rhost}{\rho^*}
\newcommand{\pst}[1]{p^{st}(#1)}
\newcommand{\mstep}[1]{\overset{\mathrm{#1}}{\leadsto}}
\newcommand{\cov}{Cov}
\newcommand{\var}{Var}
\newcommand{\linineq}[1]{L_{#1}}
\newcommand{\coreq}[1]{{#1}_{\textit{eq}}}

%hyperlogic macros
\newcommand{\ubar}[1]{\text{\b{$#1$}}}
\newcommand{\ap}{\textit{AP}}
\newcommand{\until}[1]{U^{\leq {#1}}}
\newcommand{\infuntil}{U}
\newcommand{\true}{\top}
\newcommand{\false}{\bot}
\newcommand{\opAnd}{\wedge}
\newcommand{\Or}{\vee}
\newcommand{\some}[1]{\Diamond^{\leq {#1}}}
\newcommand{\infsome}{\Diamond}
\newcommand{\all}[1]{\Box^{\leq {#1}}}
\newcommand{\infall}{\Box}
\newcommand{\ineqs}{\bowtie}
\newcommand{\next}{\bigcirc}
\newcommand{\Label}{L}
\newcommand{\assign}{V}
\newcommand{\pathvars}{\Pi}
\newcommand{\vecpi}{\bar{\pi}}
\newcommand{\vecpath}{\bar{\Path}}
\newcommand{\finpathvars}{\Pi_{fin}}
\newcommand{\semantics}[2]{{\llbracket #1\rrbracket}_{#2}}
\newcommand{\shift}[2]{#1^{(#2)}}
\newcommand{\notmodels}{\nvDash}
\newcommand{\thresh}[1]{\textit{threshold}_{#1}}
\newcommand{\nullhyp}{H_0}
\newcommand{\althyp}{H_1}
\newcommand{\sample}[1]{\mathbf{#1}}
\newcommand{\statistic}{T}
\newcommand{\falsepos}{\alpha_{\textit{FP}}}
\newcommand{\falseneg}{\alpha_{\textit{FN}}}
\newcommand{\range}[1]{[#1]}
\newcommand{\complmnt}[1]{{#1}^c}
\newcommand{\pred}{\mathbb{P}}
\newcommand{\pdf}{f}
\newcommand{\cdf}{F}
\newcommand{\bayes}{\mathbb{B}}
\newcommand{\samsp}{\Omega}
\newcommand{\extset}[2]{{#1}^{+}_{#2}}
\newcommand{\deductset}[2]{{#1}^{-}_{#2}}
\newcommand{\nocol}{\textit{nocol}}
\newcommand{\smc}{\text{SMC}}
\newcommand{\dtmc}{\text{DTMC}}
\newcommand{\hyppctl}{\text{HyperPCTL*}}
\newcommand{\sprt}{\text{SPRT}}
\newcommand{\toolname}{\text{HyProVer}}
\newcommand{\foravoid}{\psi_{\textit{ca}}}
\newcommand{\forreach}{\psi_{\textit{goal}}}
\newcommand{\Beta}{\text{Beta}}
\newcommand{\bernou}{\mathcal{B}}
\newcommand{\uniform}{\textit{U}}
\newcommand{\dbeta}{\textit{Beta}}
\newcommand{\mle}{\text{MLE}}
\newcommand{\tmle}[1]{{#1}^\textit{MLE}}
\newcommand{\argmax}{\text{argmax}}
\newcommand{\argmin}{\text{argmin}}
\newcommand{\kdiv}{\vert\!\vert}

%Marking macros
\newcommand{\sd}[1]{{\color{black}#1}}
\newcommand{\pp}[1]{{\color{green}#1}}

\title{Bayesian Statistical Model Checking for Multi-agent Systems using HyperPCTL*}

%\begin{comment}
\author{\IEEEauthorblockN{Spandan Das}
\IEEEauthorblockA{\textit{Department of Computer Science} \\
\textit{Kansas State University}\\
Manhattan, Kansas \\
spandan@ksu.edu}
\and
\IEEEauthorblockN{Pavithra Prabhakar}
\IEEEauthorblockA{\textit{Department of Computer Science} \\
\textit{Kansas State University}\\
Manhattan, USA\\
pprabhakar@ksu.edu}
}
%\end{comment}

\maketitle

\begin{abstract}
In this paper, we present a Bayesian method for statistical model checking ($\smc$) 
of probabilistic hyperproperties specified in the logic $\hyppctl$ on discrete-time Markov chains ($\dtmc$s). 
While $\smc$ of $\hyppctl$ using sequential probability ratio test ($\sprt$) has been explored before, 
we develop an alternative $\smc$ algorithm based on Bayesian hypothesis testing.
In comparison to PCTL*, verifying $\hyppctl$ formulae is complex owing to their 
simultaneous interpretation on multiple paths of the $\dtmc$.
In addition, extending the bottom-up model-checking algorithm of the non-probabilistic 
setting is not straight forward due to the fact that $\smc$ does not return exact
answers to the satisfiability problems of subformulae, 
instead, it only returns correct answers with high-confidence. 
We propose a recursive algorithm for $\smc$ of $\hyppctl$ based on a modified 
Bayes' test that factors in the uncertainty in the recursive satisfiability results.
We have implemented our algorithm in a Python toolbox, HyProVer, and compared our approach with the $\sprt$ based $\smc$. 
Our experimental evaluation demonstrates that our Bayesian $\smc$ algorithm performs 
better both in terms of the verification time and the number of 
samples required to deduce satisfiability  of a given $\hyppctl$ formula.
\end{abstract}

\begin{IEEEkeywords}
Statistical Model Checking, Bayesian Hypothesis Testing, Probabilistic Hyperproperties
\end{IEEEkeywords}

\section{Introduction}
Formal verification of software systems that interact with physical environments as in cyber-physical systems, has gained significant importance in recent years owing to the safety-criticality of these systems.
Probabilistic and stochastic models have been proven to be useful tools in modeling the uncertainty inherent in the interactions with the environment.
In particular, discrete-time Markov chains are a simple, yet widely applicable, class of probabilistic systems that capture uncertainties using transition probabilities.
Correctness specifications of these systems need to incorporate the probabilistic aspects and are often captured using probabilistic logics. 
For example, uncertain behaviour of a car driver has been modeled as a discrete-time Markov chain ($\dtmc$) \cite{sadigh2014data},  and several behavioral properties of the driver are encoded using the Probabilistic Computation Tree Logic (PCTL). Further, a robot performing random walk on a grid has been modeled using $\dtmc$ \cite{lal2020bayesian},  and properties like probabilistic goal reaching have been encoded using Continuous Stochastic Logic (CSL). 

Traditional logics focus on properties about single execution traces of a system. 
While several interesting properties can be captured using single-trace logics, they fall short in capturing interactive behaviors between multiple agents.
Hyperproperties are a new class of properties \cite{clarkson2014temporal,hsu2021bounded} that capture multi-trace behaviors. 
 For example, consider two robots walking on an $n\times n$ grid. The property that the two robots never collide, can only be expressed by a formula that refers to the paths of both the robots simultaneously \cite{wang2020hyperproperties}.
In addition, several security properties can be easily captured using hyper-logics such as information flow \cite{agrawal2016runtime,clarkson2010hyperproperties}, non-interference \cite{finkbeiner2017monitoring,hsu2021bounded} and observational determinism \cite{finkbeiner2016deciding,finkbeiner2017monitoring}.

In this paper, we focus on the problem of verifying discrete-time Markov chains with respect to probabilistic hyperproperties that allow us to specify constraints on joint probability of satisfaction of real-time behaviors by independent executions of a multi-agent system. 
We focus on discrete-time systems and properties, however, the ideas in the paper will be foundational toward verification of these systems and properties in continuous time.
We consider robot navigation problems as our case studies.
For example, consider a two robot navigation scenario on a grid, wherein we desire to ensure an upper bound on the collision probability.
Such properties can be easily specified using probabilistic hyperproperty logics such as $\hyppctl$, which is an   
expressive hyperproperty logic that allows nesting of both temporal and probabilistic operators (see \cite{wang2021statistical}). 
Thus, it has found its use in formal specification of probabilistic hyperproperties of cyber-physical and  robotic systems \cite{wang2021statistical}.
Several quantitative extensions of information security properties have been captured using $\hyppctl$ such as 
qualitative information flow \cite{kopf2007information}, probabilistic non-interference \cite{gray1992toward} and differential privacy \cite{dwork2014algorithmic}.

Probabilistic model-checkers such as PRISM \cite{kwiatkowska2011prism} and STORM \cite{dehnert2017storm} verify probabilistic systems with respect to probabilistic properties.
While probabilistic model-checking performs reasonably well for single trace properties, even in the non-probabilistic setting, model-checking hyperproperties is a challenging task \cite{finkbeiner2018model}. 
% For instance, complexity of model checking of LTL is PSPACE complete \cite{clarke2018handbook,baier2008principles,sistla1985complexity}, whereas, model checking of HyperLTL formula can be as bad as 2-EXPTIME in the quantitative bound \cite{finkbeiner2018model}.
Hence, light-weight verification methods based on sampling, such as statistical model-checking \cite{wang2019statistical,wang2021statistical} have been explored.

Statistical Model-Checking ($\smc$) \cite{clarke2008statistical,grosu2005monte,merayo2009statistical,rabih2009statistical,sen2004statistical,sen2005statistical,younes2002probabilistic,basu2009approximate,cappart2017verification,henriques2012statistical,zuliani2013bayesian} is an alternative sampling based method to verify satisfiability of the specification.
More precisely, $\smc$ algorithms collect a sample of execution paths from the system and determine the satisfiability of the property based on a statistic computed using the satisfiability of these sample execution paths.
$\smc$ scales extremely well for complex systems, however, it does not provide exact inference, rather, incurs Type-I and Type-II errors which correspond to the probability that, the test concludes that the system violates the property when indeed it is satisfied, and the probability that, the test concludes that the system satisfies the property when actually it is violated, respectively. 
$\smc$ for probabilistic hyperproperties has been discussed in \cite{wang2019statistical,wang2021statistical}, where sequential probability ratio test ($\sprt$) has been used. 
These algorithms fall into the frequentist regime where the parameter is assumed to be fixed but unknown.
In this paper, we explore an alternative approach based on Bayesian hypothesis testing which incorporates prior information about the parameters in the form of their distributions.
We present a Bayesian $\smc$ algorithm based on Bayes' test that takes as input Type-I and Type-II error bounds and provides corresponding guarantees on the inference of the satisfiability.

% Model checking is the automated process of verification of a formal specification on a given model. Model checking of stochastic systems is an active area of research since almost all systems with uncertainty can be aptly described by such systems. Although model checkers like PRISM \cite{kwiatkowska2011prism} and STORM \cite{dehnert2017storm} have been developed for this purpose, these tools are unable to verify probabilistic hyperproperties as of yet.
% Model checking of probabilistic hyperproperties have been performed using two approaches mainly. In the numerical approach, inference about the model is gathered by calculating exact probabilities. Although this method is accurate, it does not scale well for large and complex systems. On the other hand, statistical model checking ($\smc$) is based on random sampling and statistical tests. It scales extremely well for complex systems and is easily extensible for verification of probabilistic hyperproperties. However, it does not provide exact inference and always incurs Type-I or Type-II error based on rejection or acceptance of the null hypothesis (respectively). $\smc$ for hyperproperties has been discussed in \cite{wang2019statistical,wang2021statistical}, where Clopper-Pearson confidence intervals and sequential probability ratio test ($\sprt$) has been used to verify probabilistic hyperproperties. In this paper, we explore another direction, namely, the Bayesian approach for verification of probabilistic hyperproperties.

$\smc$ of probabilistic hyperproperties specified in $\hyppctl$ is challenging due to two reasons: $\hyppctl$ in interpreted over multiple traces and the probabilistic operators are nested.
To address the multi-trace setting, we present a multi-dimensional hypothesis testing and incorporate that within a recursive algorithm to tackle the nesting in probabilistic operators.
Our broad approach to verifying a $\hyppctl$ formula $\varphi$ is a bottom up algorithm similar to classical CTL model-checking \cite{clarke2018handbook,baier2008principles}, wherein we verify the bottom level probabilistic operators and work our way upwards using the results from the verification of the subformulae.
Since, $\smc$ only returns the correct answers with certain confidence, this error needs to be factored into the $\smc$ procedures for top level formulae.
We tackle this by designing a modified Bayes' test that uses a statistic computed from these approximately correct answers.
While such a framework has been developed in the single trace setting~\cite{lal2020bayesian}, the main contribution of the paper is the careful incorporation of multi-dimensional Bayesian hypothesis testing into the recursive framework by establishing appropriate bounds on the confidence of lower level multi-dimensional SMC calls.

% Since a probabilistic subformula of a nested probabilistic formula can never be exactly verified using $\smc$, it is difficult to check whether the original formula is satisfied on an independent set of traces or not. To circumvent this problem, we map the original verification problem to a hypotheses testing problem and define a new kind of statistical test, viz., the approximate Bayes' test (Theorem \ref{approx_bayes}) to solve the hypotheses testing problem. The test incorporates the errors incurred while verifying (recursively) the probabilistic subformulae, during verification of the original formula. This gives rise to a recursive verification algorithm for bounded $\hyppctl$ formulae, which we describe in section \ref{algo}.

We have implemented our algorithm in a Python toolbox $\toolname$ and tested it on a series of robot navigation scenarios in the grid world setting (see Section \ref{grid}). 
Though the algorithm is written in a bottom-up fashion, we have implemented an efficient top-down approach, wherein lower level SMC calls are made only on a need-by basis, thereby avoiding an exhaustive computation of lower level SMC calls.
We also implemented the SPRT approach for comparison.
We observed that, for the same level of confidence, our approach takes fewer number of samples and less time to verify both non-nested and nested $\hyppctl$ formulae compared to $\sprt$ \cite{wang2021statistical} (none of the case studies discussed in \cite{wang2021statistical} involved a nested probabilistic formula). Also, our approach does not need the assumption that the true probability lies outside the \emph{indifference region} (for non-nested formula) and thus, is more general than $\sprt$. Thus, our approach is more practical for verification of general $\hyppctl$ formulae.
Bayesian approaches are often criticized for their need for prior information.
We considered different choices for prior by considering various parameters for the $\Beta$ distribution and observed that priors effect the verification time in a minor manner.
However, since $\sprt$ does not use prior information, we used uniform prior while comparing Bayesian approach to $\sprt$, so that any parameter value is equally probable.
% Our experiments show that verification time and number of samples required do not vary much for different $\Beta$ distributions. 
% Thus, our approach is not critically dependent on the prior, however, the Bayesian setting helps in obtaining the inference faster as compared to the frequentist $\smc$.
% On the other hand, in comparison to the frequentist $\smc$, our Bayesian approach fares better both in terms of the number of samples to infer the answer as well as the time required for the same.
% Though theoretically $\sprt$ can verify nested probabilistic formulae, in practice it fails to scale for even one level of nesting.

% \sd{The main contributions of this paper are twofold. First, we extended Bayesian hypothesis testing from single dimension to multidimensions. Then, we devised an $\smc$ algorithm based on multidimensional hypothesis testing, that can verify nested $\hyppctl$ formulae on a given $\dtmc$.
% The main challenges in solving the latter problem is that we have to use $\smc$ for verifying probabilistic subformulae and the verification result is not exact. 
% Hence, errors are incurred while using these verification results to verify the original formula, which leads to the existence of \emph{indifference regions}. We showed that a nested $\hyppctl$ formula cannot be verified using our approach if one of the parameters lie within an indifference region.}

To summarize, the main contributions of this paper are:
\begin{itemize}
    \item A first Bayesian statistical model-checking algorithm for $\hyppctl$ that combines multi-dimensional Bayesian hypothesis testing and a recursive framework for error propagation of nested probabilistic operators.
    \item A novel top-down implementation that makes lower level SMC calls on-need and consists of additional bookkeeping to avoid redundant and unnecessary computation.
    \item Experimental evaluations and comparisons with existing approaches that demonstrate the scalability and benefits of our approach.
\end{itemize}

\emph{Organization:} The rest of the paper is organized as follows. We discuss related work in Section \ref{related}. Some basic definitions and notations are covered in Section \ref{prelim}. We define the model checking problem formally in Section \ref{problem}. The robot navigation system grid world and two properties of it are described in Section \ref{grid}. In Section \ref{hyp}, we discuss general and multi-dimensional hypotheses testing, as well as, Bayes' test, approximate Bayes' test and sequential probability ratio test ($\sprt$). Section \ref{algo} explains the recursive Bayesian $\smc$ algorithm and compares its approach with that of $\sprt$ based $\smc$ \cite{wang2021statistical}. In Section \ref{exp}, we evaluate our algorithm on the case study discussed in Section \ref{grid}. Finally, we conclude in Section \ref{conc}.

\section{Related Works}\label{related}
Broadly, two classes of techniques have been explored for verification of probabilistic properties on probabilistic system models. 
Probabilistic model checking (PMC) techniques based on numerical methods that compute exact probability of satisfaction of a specification given in logics such as LTL \cite{vardi1985automatic}, PCTL \cite{abate2010approximate,ciesinski2004probabilistic}, CSL \cite{aziz1996verifying}, Bounded Temporal Logic (BTL) \cite{baier2003model} and $\omega$-regular languages \cite{bustan2004verifying} on stochastic models such as discrete-time Markov chain ($\dtmc$) \cite{abate2010approximate,ciesinski2004probabilistic,bustan2004verifying,vardi1985automatic}, continuous time Markov chain (CTMC) \cite{aziz1996verifying,baier2003model}, Markov decision process (MDP) \cite{ciesinski2004probabilistic,hermanns2008probabilistic} and $\omega$-automaton \cite{vardi1985automatic} have been explored. However, these methods are generally model specific and involve solving a system of linear equations \cite{ciesinski2004probabilistic,baier2003model} or complex properties of algebraic and transcendental number theory \cite{aziz1996verifying}, which makes them computationally intensive.\par
On the other hand, statistical model checking ($\smc$) algorithms are based on random sampling of the probabilistic models and use statistical tests to provide (with certain confidence) inference on whether probability of a property lies within a certain range. Although $\smc$ can only provide answers with certain amount of errors, it is much less computation intensive and thus used in a large number of real life applications \cite{cappart2017verification,jha2009bayesian}. $\smc$ has been used for verification of properties specified in LTL \cite{clarke2008statistical,grosu2005monte}, Bounded LTL \cite{henriques2012statistical}, Probabilistic Bounded LTL \cite{zuliani2013bayesian}, PCTL \cite{basu2009approximate} and CSL \cite{lal2020bayesian,sen2005statistical,younes2002probabilistic} where $\dtmc$ \cite{basu2009approximate,lal2020bayesian,sen2005statistical}, CTMC \cite{sen2005statistical}, MDP \cite{bogdoll2011partial,henriques2012statistical}, semi-Markov process (SMP) \cite{sen2005statistical,younes2002probabilistic}, generalised semi-Markov process (GSMP) \cite{younes2002probabilistic} and various Stochastic Hybrid Systems (SHS) \cite{clarke2008statistical,grosu2005monte,sen2004statistical,zuliani2013bayesian,merayo2009statistical,larsen1991bisimulation} have been considered for the underlying probabilistic models. A brief discussion on state of the art $\smc$ techniques (e.g., sampling and testing methods) can be found from several survey papers \cite{agha2018survey,legay2010statistical}. For example, sampling techniques like Monte-Carlo \cite{grosu2005monte} and perfect simulation \cite{rabih2009statistical} have been used to sample various probabilistic models, and statistical tests like Bayesian \cite{lal2020bayesian,zuliani2013bayesian}, importance sampling \cite{barbot2012coupling}, acceptance sampling \cite{younes2002probabilistic} have been used to gather inference about their probabilistic properties. Detailed study on various statistical tests and their applications has been performed \cite{hadjis2015importance,wald1945sequential}. Tools like Apmc \cite{peyronnet2006apmc}, PRISM \cite{kwiatkowska2011prism}, STORM \cite{dehnert2017storm} have been developed that can automatically verify probabilistic properties on probabilistic models.\par
Encoding multi-trace properties (known as hyperproperties) using logics like LTL, PCTL, CSL is not straightforward as they mainly capture properties about individual traces of the underlying model. Logics like Bounded HyperLTL \cite{hsu2021bounded}, HyperLTL \cite{clarkson2014temporal}, HyperCTL* \cite{clarkson2014temporal} were defined for specification of hyperproperties. More recently, HyperPSTL \cite{wang2019statistical} and $\hyppctl$ \cite{wang2021statistical} have been defined to encode probabilistic hyperproperties. Model checking of probabilistic hyperproperties is a relatively new area of research; one recent work explores a statistical model-checking algorithm based on Clopper-Pearson interval method \cite{wang2019statistical} and sequential probability ratio test ($\sprt$) \cite{wang2021statistical}. In this paper, we explore a Bayesian approach for verification of probabilistic hyperproperties. To the best of our knowledge, this is the first statistical model-checking algorithm for probabilistic hyperproperties based on a Bayesian approach.
\section{Preliminaries}\label{prelim}
Let us define some basic terms and notations that are used in the paper.
    Given a sequence $\Path = s_1,s_2,\dots$, $\Path[i]$ denotes $i^{th}$ element of the sequence $\Path$, that is, $s_i$, and $\size{\Path}$ denotes length of the sequence $\Path$.
    For an $n$-tuple $\bar{\sample{X}} = (X_1,\dots,X_n)$, $\bar{\sample{X}}[i]$ denotes the $i^{th}$ element of $\bar{\sample{X}}$, that is, $X_i$. For an $n$-tuple $\bar{\sample{X}}$, infinite norm is denoted by $\norm{\bar{\sample{X}}}{\infty}$ where, $\norm{\bar{\sample{X}}}{\infty} = \max_{i=1}^n\bar{\sample{X}}[i]$.
    For any natural number $n$, $[n]$ denotes the set of natural numbers $\{1,\dots,n\}$.
    
    Recall that, given a continuous random variable $X$, the function $F:\real\rightarrow[0,1]$ defined as $\cdf(x)=\prob(X\leq x)$ is called the \emph{cumulative distribution (cdf)} of $X$ and the function $\pdf:\real\rightarrow\real$ defined as $\pdf(x) = \frac{d}{dx}\cdf(x)$ is called the \emph{probability density (pdf)} of $X$.
    
    For any $x\in\real^n$, the $\epsilon$-ball around $x$ with respect to the infinite norm is defined as $\ball{\epsilon}{x}:=\{y\in\real^n: \norm{x-y}{\infty}\leq \epsilon\}$. For a set $A\subseteq\real^n$, the boundary of $A$ (denoted $\boundary{A}$) is defined as the set of all $x\in\real^n$ such that $\ball{\epsilon}{x}\intersect A\neq\emptyset$ and $\ball{\epsilon}{x}\intersect \complmnt{A}\neq\emptyset$, for all $\epsilon>0$.

\section{$\hyppctl$ Verification Problem}\label{problem}
In this section, we formalize the model checking problem of a system given as a Discrete-time Markov Chain with respect to correctness criterion specified in the logic $\hyppctl$ \cite{wang2021statistical} which specifies hyperproperties.

\subsection{Discrete-time Markov Chain}
A discrete-time Markov chain is a structure that consists of a set of states along with transitions that specify the probability of the next state of the system given the current state.
\begin{definition}[Discrete-time Markov Chain]
	A discrete-time Markov Chain ($\dtmc$) is a tuple $\M=(\states,\tran,\ap,\Label)$ 
	where,
	\begin{itemize}
		\item $\states$ is the finite set of states;
		\item $\tran:\states\times\states\rightarrow[0,1]$ is the \emph{transition probability function} such 
		that for any $s\in\states$, $\sum_{s^\prime\in \states}\tran(s,s^\prime)=1$;
		\item $\ap$ is the set of atomic propositions; and
		\item $\Label:\states\rightarrow 2^{\ap}$ is the labeling function that associates a set of atomic 
		proposition to each state.
	\end{itemize}
\end{definition}
A path (trace) of a $\dtmc$ $\M$ is a sequence of states 
$\Path=s_1,s_2,\dots$ such that for all $i<\size{\Path}$, $\tran(s_i,s_{i+1})>0$. $\ipath{\M}$ denotes the set of all infinite paths and $\fpath{\M}$ denotes the set of all finite paths of a $\dtmc$ $\M$.

\subsection{$\hyppctl$ Logic}
\begin{comment}
A Logic provides us with a syntactic language for specification of properties, and a semantic definition describing the interpretation of a string from the language (formula) on a particular model.
\end{comment}
Next, we define the syntax and semantics of $\hyppctl$ which is a logic for specifying probabilistic hyperproperties and is interpreted on multiple traces from a $\dtmc$. 

\subsubsection{Syntax of $\hyppctl$}
Let us fix a set of atomic propositions $\ap$ and a (possibly infinite) set of path variables $\pathvars$.
$\hyppctl$ formulae over $\ap$ and $\pathvars$ are defined by the following grammar:
\begin{align}\label{syntax}
	&\phi := a^\pi\mid\neg\phi\mid \phi\opAnd\phi\mid \next\phi\mid \phi 
	\until{k}\phi\nonumber\\
	&
	\begin{aligned}
	\   \   \   \   \   \   \   \   \   \   \   
	\mid 
	\pred_D(\syntaxpr^{\vecpi}(\phi),\dots,\syntaxpr^{\vecpi}(\phi)),
	\end{aligned}
\end{align}
where
	$a\in\ap$ is an atomic proposition and
	$\pi\in\pathvars$ is a path variable.
	$\next$ and $\until{k}$ are the ‘next’ and ‘until’ operators respectively, where $k 
	\in\nat\union\{\infty\}$ is the time bound. For this work, we will assume $k\in\nat$, i.e., we will only consider `until' operators with finite time bound.
	$\syntaxpr^{\vecpi}$ is the probability operator for a tuple of path variables 
	$\vecpi 
	= (\pi_1, \dots , \pi_n)$, where $n \in \nat$ and $\pi_i\in\pathvars$ for all $i\in [n]$.
	$\pred_D(x_1,\dots,x_n)$ is an $n$-ary predicate function which is satisfied iff $(x_1,\dots,x_n)\in D\subseteq \real^n$.
	
	Additional logic operators are derived as usual: $\true \equiv a^\pi \opAnd \neg a^\pi$, 
$\phi\wedge 
\phi^\prime \equiv \neg(\neg\phi\Or\neg\phi^\prime)$, $\phi \Rightarrow \phi^\prime \equiv 
\neg\phi\Or\phi^\prime$, $\some{k}\phi\equiv \true\until{k}\phi$, and 
$\all{k}\phi\equiv\neg\some{k}\neg\phi$. 
\begin{comment}
We denote $\until{\infty}$, $\some{\infty}$, and $\all{\infty}$ by 
$\infuntil$, $\infsome{}$, and $\infall$ respectively. 
\end{comment}
We represent a $1$-tuple by its element, i.e., 
$\syntaxpr^{(\pi)}$ is
written as $\syntaxpr^\pi$.
	
	\begin{remark}
	In general, $\hyppctl$ formulae are defined by the grammar:
	\begin{align*}
	    &\phi := a^\pi\mid\phi^\pi\mid\neg\phi\mid \phi\opAnd\phi\mid \next\phi\mid \phi \until{k}\phi\mid \rho\ineqs\rho\\
	    &\rho := f(\rho,\dots,\rho)\mid \syntaxpr^{\vecpi}(\phi)\mid \syntaxpr^{\vecpi}(\rho),
	\end{align*}
	where $\ineqs\in\{>,<,\geq,\leq,=\}$.
	Note that, a formula of the form $\rho_1\ineqs\rho_2$ can be equivalently depicted as $\pred_D(\rho_1,\rho_2)$, where $D=\{(x_1,x_2)\mid x_1-x_2\ineqs 0\}$. Similarly, other $\rho$ formulae can also be expressed by the grammar in Equation \ref{syntax} defined above, i.e., the grammar in Equation \ref{syntax} is as expressive as the general $\hyppctl$ grammar.
	\end{remark}

\subsubsection{Semantics of $\hyppctl$}
A $\hyppctl$ formula over $\ap$ and $\Pi$ is interpreted on the pair $(\M,\assign)$, where $\M$ is a $\dtmc$ with propositions $\ap$, and 
$\assign$:$\pathvars\rightarrow\ipath{\M}$ is a mapping from
$\pathvars$ to $\ipath{\M}$. We say $\assign$ is a \emph{path assignment}.
    $\semantics{\phi}{\assign}$ denotes instantiation of the mapping $\assign$ on the formula $\phi$, i.e., each $\pi\in\pathvars$ that appears in $\phi$ is replaced by the path $\assign(\pi)\in \ipath{\M}$.
    Let $\vecpi$ and $\vecpath$ be sequences of path variables and paths of $\M$ respectively such that $\size{\vecpi}=\size{\vecpath}$. $\assign[\vecpi\rightarrow\vecpath]$ denotes revision of the mapping $\assign$ where $\vecpi[i]$ is remapped to $\vecpath[i]$ for all $i\in[\size{\vecpi}]$. 
	$\shift{\assign}{i}$ denotes the $i$-shift of $\assign$, i.e., $\shift{\assign}{i}(\pi)$ is 
	the 
	path $(\assign(\pi)[i],\assign(\pi)[i+1],\dots)$.
	$\ipath{\assign,\pi}$ denotes the set of all 
	paths $\Path\in\ipath{\M}$ such that $\Path[0] = \assign(\pi)[0]$ and $\ipath{\assign,\vecpi}$ denotes the set of all path sequences $\vecpath\in(\ipath{\M})^*$ such that $(\vecpath[i])[0]=\assign(\vecpi[i])[0]$ for all $i$.
The semantics of $\hyppctl$ is described as follows,
\begin{align*}
	&(\M,\assign)\models a^\pi\quad\text{iff}\quad a\in \Label(\assign(\pi)(0))\\
	&(\M,\assign)\models \neg\phi\quad\text{iff}\quad (\M,\assign)\notmodels\phi\\
	&(\M,\assign)\models \phi_1\opAnd\phi_2\quad\text{iff}\quad (\M,\assign)\models\phi_1\text{ 
		and }(\M,\assign)\models\phi_2\\
	&(\M,\assign)\models\bigcirc\phi\quad\text{iff}\quad(\M,\shift{\assign}{1})\models\phi\\
	&(\M,\assign)\models\phi_1\until{k}\phi_2\quad\text{iff}\quad\exists i\leq k \text{ such that 
	}\\
	& \ \   \   \   \    \  \   \   \   \   
	((\M,\shift{\assign}{i})\models\phi_2)\opAnd(\forall j<i, (\M,\shift{\assign}{j})\models\phi_1)\\
	&(\M,\assign)\models\pred_D
	(\syntaxpr^{\vecpi_1}(\phi_1),\dots,\syntaxpr^{\vecpi_n}(\phi_n))
	\quad\text{iff}\\
	& \ \   \   \   \    \  \
	\quad(\semantics{\syntaxpr^{\vecpi_1}(\phi_1)}{\assign},\dots,\semantics{\syntaxpr^{\vecpi_n}(\phi_n)}{\assign}) \in D\\
	&\semantics{\syntaxpr^{\vecpi}(\phi)}{\assign}= \prob\{\vecpath \in \ipath{\assign,\vecpi}
	\mid 
	(\M, \assign[\vecpi\rightarrow\vecpath]) \models \phi\}
\end{align*}
The expression $\prob\{\vecpath \in \ipath{\assign,\vecpi}
	\mid 
	(\M, \assign[\vecpi\rightarrow\vecpath]) \models \phi\}$ denotes the probability of 
satisfaction of $\phi$ on the set of path sequences $\{\vecpath: \size{\vecpath}=\size{\vecpi}\text{ and }(\vecpath[i])[0] 
=\assign(\vecpi[i])[0]\  \forall i\}$.
\begin{comment}
\begin{remark}
Note that, we have defined a simplified syntax that is as expressive as the general $\hyppctl$.
\end{remark}
\end{comment}

\sd{
\subsection{The Model Checking Problem}
Let $\M$ be a $\dtmc$ and $\phi$ be a $\hyppctl$ formula. Let $\pathvars_\phi$ denote the finite set of path variables used to define $\phi$ and $\assign:\pathvars\rightarrow\ipath{\M}$ be a path assignment. 
The model checking problem is to decide whether $\M,\assign\models\phi$ given a $\dtmc$ $\M$ and path assignment $\assign$.
Note that, since we encounter path variables that appear in $\phi$ only and all linear time operators are bounded, it is enough to provide a path assignment $\assign:\pathvars_\phi\rightarrow \fpath{\M}$, as satisfaction of a formula depends only on finite prefixes of assignments.
}

% The model checking problem of $\phi$ on $\M$ given $\assign$ is to determine the truth of the statement $(\M,\assign)\models\phi$. Note that, since we only consider bounded formulae, $\phi$ can be checked on a set of finite paths of the $\dtmc$ $\M$.
\section{Case Study: Grid World}\label{grid}
\begin{figure}[]
	\centering
	\setlength\abovecaptionskip{-10pt}
	\setlength\belowcaptionskip{-10pt}
	\includegraphics[width=10cm]{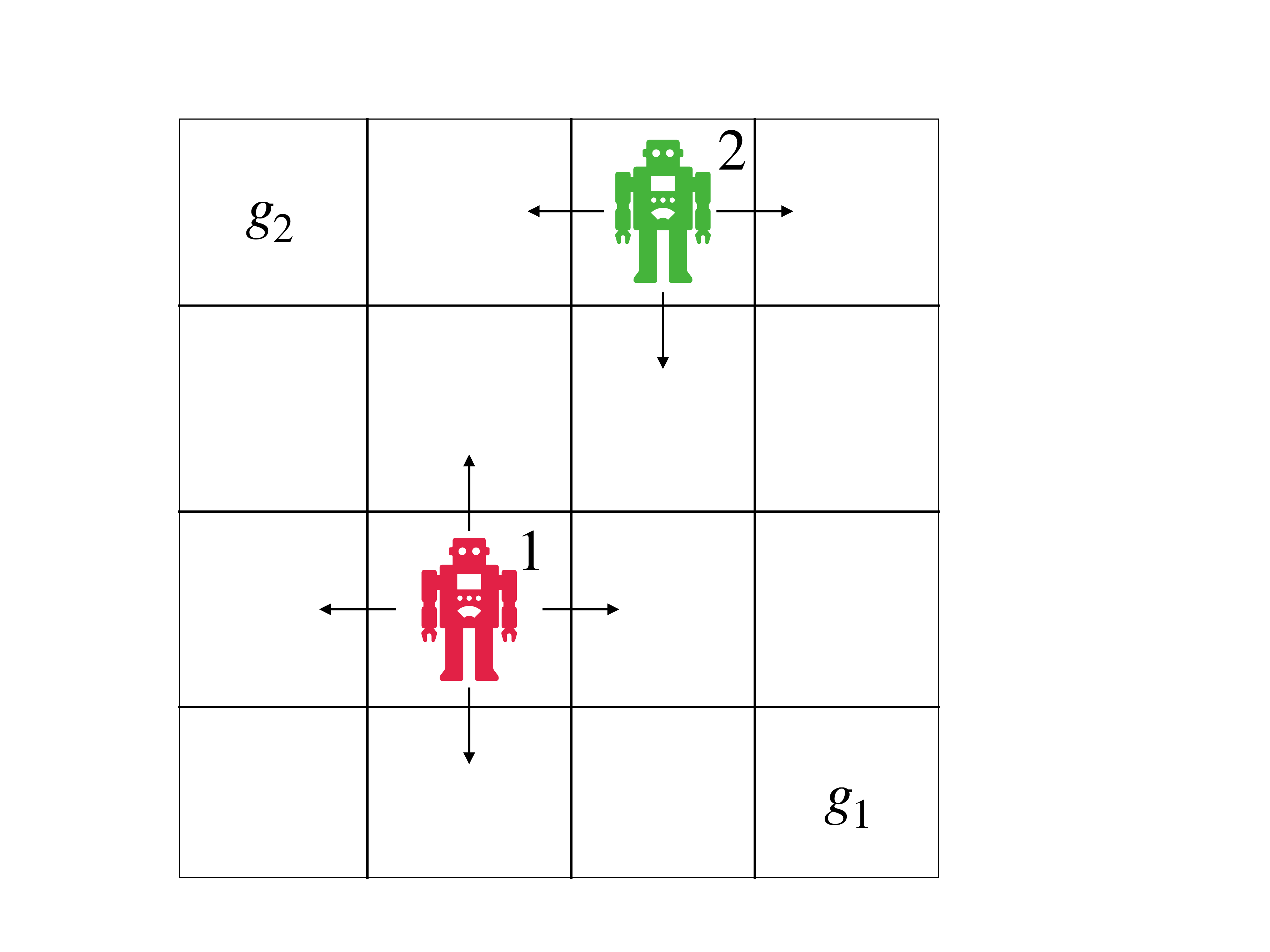}
	\caption{Robot Motion in Grid World}
	\label{grid_motion}
\end{figure}
In this section, we describe a robot navigation scenario on a grid world and discuss two desirable properties.
Consider an $n\times n$ grid where $N$ robots are performing $2$ dimensional random walks. 
In other words, each robot can move to a cell to its left, right, above or below only (if possible) with non zero probability. 
In Figure \ref{grid_motion}, a $4\times 4$ grid with two robots has been shown.
Each robot has a particular goal which is one of the cells of the grid. 
In Figure \ref{grid_motion}, the goal for the first robot (red) has been marked by $g_1$ and goal for the second robot (green) has been marked by $g_2$.
The first property we would like to verify is that the robots do not collide with each other within a finite number of steps with high probability.
The second property we would like to verify is that the robots reach their respective goals within a finite number of steps with high probability while avoiding collision with each other (with high probability).

\subsection{$\dtmc$ Modeling}
The robot navigation system can be aptly modeled using a discrete-time Markov chain ($\dtmc$), where, a state can uniquely represent the position and identity of a robot and the atomic propositions represent individual cells properties. In other words, a state is of the form $q_{ijk}$ where, $(i,j)$ denotes the position and $k$ denotes the identity of the robot. Thus, we have the set of states $\states = \{q_{ijk}\mid i,j\in[n] \text{ and }k\in [N]\}$. 
Each state $q_{ijk}$ is marked with the atomic proposition $a_{ij}$, which denotes the position corresponding to that state, i.e., $a_{ij}\in L(q_{ijk})$ for all $k$. 
Also, $q_{ijk}$ is marked with the proposition $g_k$ if $(i,j)$ is the goal of robot $k$, i.e., $g_k\in L(q_{ijk})$ iff $(i,j)$ is the goal of robot $k$. 
This gives us the set of all atomic propositions as $\ap  = \{a_{ij}\mid i,j\in [n]\}\union \{g_i\mid i\in [N]\}$ and the labeling function $\Label$. 
The transition function $\tran$ ensures that a robot can move from cell $(i,j)$ to cell $(i^\prime,j^\prime)$ only if they share a common boundary, i.e., 
$\tran(q_{ijk},q_{i^\prime j^\prime k^\prime})>0$ iff cells $(i,j)$ and $(i^\prime,j^\prime)$ share a common boundary and $k=k^\prime$. This probability might vary for different robots. 
Incorporating $k$ as an index of the states ensures that the probability $\tran(q_{ijk},q_{i^\prime j^\prime k})$ can be uniquely defined for each robot $k\in [N]$.

\subsection{Collision Avoidance}
A collision between two robots happens if they reach the same cell at the same point of time. Avoiding collision is a desirable property. In other words, assuming $\pi_1$ is the path followed by the first robot and $\pi_2$ is the path followed by the second robot on the $\dtmc$, we would like that $a_{ij}$ not in both $\pi_1[k]$ and $\pi_2[k]$ for all $k\in\nat$. Since we are only considering bounded formulae, we assume $k\leq K$ for some $K\in\nat$. Thus an interesting property of the grid world will be to check if the probability that $a_{ij}$ in both $\pi_1[k]$ and $\pi_2[k]$ for some $i,j\in [n]$ and for any $k\leq K$ is smaller than a certain threshold $\theta$. This can be represented by the $\hyppctl$ formula $\foravoid$:
\begin{equation}\label{col_avoid}
    \pred_{[0,\theta]}\left(\syntaxpr^{(\pi_1,\pi_2)} \left[\some{K} \left(\bigvee_{i,j\in[n]} (a_{ij}^{\pi_1}\land a_{ij}^{\pi_2})\right)\right]\right).
\end{equation}

\subsection{Collision Free Goal Reaching}
Another desirable property of the grid world is that a robot reaches its goal within some finite number of steps with high probability while avoiding collision with other robots with high probability. For an $n\times n$ grid with two robots, this property can be aptly represented for the first robot using the nested $\hyppctl$ formula $\forreach$:
\begin{align}\label{gol_reach}
    &\pred_{[\theta_1,1]}\left(\syntaxpr^{\pi_1} \left[(\psi_{\nocol})\until{K} (g_1^{\pi_1})\right]\right)\text{ where,}\\
    &\psi_{\nocol} = \pred_{[\theta_2,1]}\left(\syntaxpr^{\pi_2}\left[\neg\left(\bigvee_{i,j\in[n]} (a_{ij}^{\pi_1}\land a_{ij}^{\pi_2})\right)\right]\right)\nonumber.
\end{align}
Observe that, this formula represents the property that the first robot (following the path $\pi_1$) reaches its goal within $K$ steps with probability at least $\theta_1$ while avoiding collision with the second robot (following the path $\pi_2$) with probability at least $\theta_2$. We can state this property for the second robot as well in a similar manner by simply interchanging the positions of $\pi_1$ and $\pi_2$ and replacing $g_1$ by $g_2$.
\section{Hypothesis Testing}\label{hyp}
The objective of a hypothesis test is to make some inference on the parameter(s) of a probability distribution using some statistical tests. More precisely, consider a random variable $Y$ with pdf $\pdf$ that depends on some parameter $\Theta$. The goal of hypothesis testing is to determine whether $\Theta$ lies above or below a certain value $\theta_0$ known as the threshold. Thus we obtain two hypotheses:
\begin{equation*}
    \nullhyp:\Theta\geq\theta_0 \text{ vs } \althyp:\Theta<\theta_0.
\end{equation*}
To determine whether $\nullhyp$ or $\althyp$, referred to as the null and alternate hypothesis, respectively, is true, a hypothesis test consists of sampling the random variable $Y$ to obtain data $y$, which is a sequence of values of $Y$ observed while sampling, computing a statistic $\statistic(y)$, where $\statistic$ is a function that maps $y$ to some real number $\statistic(y)$, and accepting or rejecting $\nullhyp$ \sd{(with probability greater than a desired threshold)} based on the value of $\statistic(y)$.
The statistic $\statistic$ should be chosen such that the corresponding test provides the correct inference with high probability.
There are two types of errors associated with a hypothesis test: the probability of rejecting $\nullhyp$ when $\nullhyp$ is true, referred to as the Type-I error, and the probability of accepting $\nullhyp$ when $\nullhyp$ is false, referred to as the Type-II error. It is desirable that Type-I and Type-II errors are bounded above by some small positive quantities $\alpha, \beta \in(0,1)$ (respectively).

\subsection{Multidimensional Hypothesis Testing}
Our verification problem translates to a hypothesis testing problem on a vector of random variables (random vector) each with its own parameter.
Here, we set up the multi-dimensional hypothesis testing problem that involves a multi-dimensional parameter corresponding to a multi-dimensional random vector.
% Now we extend the notion of hypothesis testing from a single random variable to multiple random variables. More precisely,
Let $\bar{\sample{Y}} = (Y_1,\dots,Y_n)$ be an $n$-dimensional random vector formed by independent random variables $Y_1, \cdots, Y_n$ with parameter vector $\bar{\sample{\Theta}}=(\Theta_1,\dots,\Theta_n)$, that is, $Y_i$ has parameter $\Theta_i$.  The goal of $n$-dimensional hypothesis testing is to decide 
whether $\bar{\sample{\Theta}}$ is in $D\subseteq \real^n$ or not. 
Hence, our hypothesis testing problem is to decide between the following two hypotheses:
\begin{equation}
\label{eq: orig_hyp}
	\nullhyp: \bar{\sample{\Theta}}\in D\quad\text{vs}\quad \althyp: \bar{\sample{\Theta}}\in\complmnt{D}.
\end{equation}
We obtain a sequence of $N$ samples $\bar{\sample{y}}=(\sample{y_1},\dots,\sample{y_N})$ from 
the $n$-dimensional random vector $\bar{\sample{Y}}=(Y_1,\dots,Y_n)$, that is, each $\sample{y_i}$ is a tuple of  
values $(y_{i1},\dots,y_{in})$ where $y_{ij}$ is sampled from the random variable $Y_j$. 
We compute a statistic $\statistic$ of $\bar{\sample{y}}$, and compare it with a value that depends on $D$ to 
determine whether $\nullhyp$ should be accepted or rejected.

\subsection{Bayesian Hypothesis Testing}
Bayesian hypothesis testing is a hypothesis testing framework wherein we assume some knowledge about the parameters in the form of prior distributions.
Bayesian methods often perform better in terms of the inference since they factor in additional information about the parameters.
Hence, we are given a random vector $\bar{\sample{Y}}$ of i.i.d. random variables along with the parameter vector $\bar{\sample{\Theta}}$ and a joint pdf $\pdf_{\bar{\sample{\Theta}}}$ that is known.
We compute a statistic known as the Bayes' factor, denoted  
$\bayes_{\bar{\sample{Y}}}(\bar{\sample{y}},D)$,  
given by the ratio of  the probability that the data $\bar{\sample{y}}$ is observed given $\nullhyp$ is true (denoted 
$\prob(\bar{\sample{y}}\mid 
\nullhyp)$) to the probability that 
$\bar{\sample{y}}$ is observed given $\althyp$ is true (denoted $\prob(\bar{\sample{y}}\mid\althyp)$). In other 
words,
\begin{align}\label{bayes_fact}
	\bayes_{\bar{\sample{Y}}}(\bar{\sample{y}},D) &= \frac{\prob(\bar{\sample{y}}\mid 
		\nullhyp)}{\prob(\bar{\sample{y}}\mid\althyp)}\nonumber\\
	&=\frac{\int_{\theta\in 
			D}\prob_{\bar{\sample{Y}}\mid\Theta}(\bar{\sample{y}}\mid\theta)\pdf_{\bar{\sample{\Theta}}}
		(\theta)d\theta}
	{\int_{\theta\in \complmnt{D}}
		\prob_{\bar{\sample{Y}}\mid\Theta}(\bar{\sample{y}}\mid\theta)\pdf_{\bar{\sample{\Theta}}}
		(\theta)d\theta}
	\cdot\frac{P_1}{P_0}
\end{align}
where $P_0 = \int_{\theta\in D}\pdf_{\bar{\sample{\Theta}}}(\theta)d\theta$ and $P_1 = \int_{\theta\in 
\complmnt{D}}\pdf_{\bar{\sample{\Theta}}}(\theta)d\theta$.

%\newpage

The following theorem \cite{lal2020bayesian,zuliani2013bayesian} provides us a way to perform hypothesis testing using Bayes' factor.

%\newpage

\begin{theorem}\label{bayes}
	[Bayes' test] Let $\bar{\sample{Y}}$ be a discrete random vector with parameter $\bar{\sample{\Theta}}$ 
	and corresponding joint pdf $\pdf_{\bar{\sample{\Theta}}}$. Let, \[\nullhyp:\bar{\sample{\Theta}}\in D \text{ vs }
	\althyp:\bar{\sample{\Theta}}\in\complmnt{D},\] be the null and alternative hypotheses respectively. Consider the test that
	\begin{itemize}
	    \item accepts $\nullhyp$ when 
	$\bayes_{\bar{\sample{Y}}}(\bar{\sample{y}},D)\geq\frac{1}{\beta}$ and
	\item rejects $\nullhyp$ when 
	$\bayes_{\bar{\sample{Y}}}(\bar{\sample{y}},D)\leq\alpha$.
	\end{itemize} 
	Then $\alpha$ and $\beta$ are the upper bounds of 
	Type-I and Type-II errors, respectively. 
\end{theorem}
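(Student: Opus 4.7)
The plan is to prove the two bounds symmetrically, treating Type-I and Type-II errors as sums over the rejection and acceptance regions, respectively, and using the defining inequality for $\bayes_{\bar{\sample{Y}}}(\bar{\sample{y}},D)$ on each region to swap $\prob(\bar{\sample{y}}\mid\nullhyp)$ with $\prob(\bar{\sample{y}}\mid\althyp)$ at the cost of a factor of $\alpha$ (respectively $\beta$). First, I would unpack the definition of the Bayes factor in the theorem statement so that $\prob(\bar{\sample{y}}\mid\nullhyp)$ and $\prob(\bar{\sample{y}}\mid\althyp)$ are read as properly normalized conditional mass functions, namely $\prob(\bar{\sample{y}}\mid\nullhyp) = \frac{1}{P_0}\int_{\theta\in D}\prob_{\bar{\sample{Y}}\mid\Theta}(\bar{\sample{y}}\mid\theta)\pdf_{\bar{\sample{\Theta}}}(\theta)\,d\theta$, and analogously with $P_1$ and $\complmnt{D}$ for $\althyp$; this matches the factor $P_1/P_0$ appearing in equation~(\ref{bayes_fact}).

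Next I would handle the Type-I bound. Let $R = \{\bar{\sample{y}} : \bayes_{\bar{\sample{Y}}}(\bar{\sample{y}},D)\leq\alpha\}$ denote the rejection region of the test. By the definition of the Bayes factor, for every $\bar{\sample{y}}\in R$ we have $\prob(\bar{\sample{y}}\mid\nullhyp)\leq\alpha\,\prob(\bar{\sample{y}}\mid\althyp)$. Hence
\begin{align*}
\falsepos
  &= \prob\bigl(\text{reject }\nullhyp \bigm| \nullhyp\bigr)
   = \sum_{\bar{\sample{y}}\in R}\prob(\bar{\sample{y}}\mid\nullhyp) \\
  &\leq \alpha \sum_{\bar{\sample{y}}\in R}\prob(\bar{\sample{y}}\mid\althyp)
   \leq \alpha\cdot 1 = \alpha,
\end{align*}
where the last step uses that $\prob(\cdot\mid\althyp)$ is a probability mass function summing to at most $1$.

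The Type-II bound is completely analogous on the acceptance region $A = \{\bar{\sample{y}} : \bayes_{\bar{\sample{Y}}}(\bar{\sample{y}},D)\geq 1/\beta\}$. There the definition of the Bayes factor gives $\prob(\bar{\sample{y}}\mid\althyp)\leq\beta\,\prob(\bar{\sample{y}}\mid\nullhyp)$, and summing over $A$ yields $\falseneg = \sum_{\bar{\sample{y}}\in A}\prob(\bar{\sample{y}}\mid\althyp) \leq \beta$. I would remark that the inconclusive region $\{\bar{\sample{y}} : \alpha < \bayes_{\bar{\sample{Y}}}(\bar{\sample{y}},D) < 1/\beta\}$ is simply discarded — the test neither accepts nor rejects there — which is exactly why the crude bounds $\alpha$ and $\beta$ suffice without any further accounting.

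The proof is short and the only real subtlety is bookkeeping around the prior-weighted marginal likelihoods $\prob(\bar{\sample{y}}\mid\nullhyp)$ and $\prob(\bar{\sample{y}}\mid\althyp)$: one must read them as conditionals (dividing by $P_0$ and $P_1$) so that the $P_1/P_0$ factor in the Bayes factor cancels correctly and each is a proper probability distribution over $\bar{\sample{y}}$. Once that normalization is fixed, the two bounds drop out immediately from the defining inequalities of the rejection/acceptance regions, and the discrete case extends to continuous $\bar{\sample{Y}}$ by replacing sums with integrals without changing the argument.
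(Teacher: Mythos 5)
Your proof is correct: the paper itself states Theorem~\ref{bayes} without giving a proof (it cites \cite{lal2020bayesian,zuliani2013bayesian}), and your argument --- reading $\prob(\bar{\sample{y}}\mid\nullhyp)$ and $\prob(\bar{\sample{y}}\mid\althyp)$ as properly normalized conditionals (dividing by $P_0$ and $P_1$, which is exactly how the $P_1/P_0$ factor in Equation~\ref{bayes_fact} arises), then summing the pointwise Bayes-factor inequality over the rejection (resp.\ acceptance) region and bounding the resulting conditional mass by $1$ --- is the standard argument and mirrors the same conditional-probability manipulation the paper carries out in its proof of the approximate Bayes' test (Theorem~\ref{approx_bayes}). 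The only assumption you use tacitly is $P_0,P_1>0$, i.e.\ $\prob(\bar{\sample{\Theta}}\in D)\in(0,1)$, which the paper also imposes, so there is no gap.
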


In the sequel, the $Y_i$ in our hypothesis test will be a Bernoulli random variable with parameter $\Theta_i \in (0,1)$, denoted $Y_i \sim \bernou(\Theta_i)$.
We consider $\Beta$ distributions, denoted $\dbeta(a,b)$ with parameters $a, b$, for the prior, since they are defined on the interval $(0, 1)$ and can represent several important distributions for suitable choices of $a$ and $b$. For example, the uniform distribution on $[0,1]$, that is, $\uniform[0,1]$, can be represented by $\dbeta(1,1)$.

\subsection{Approximate Bayesian Hypothesis Testing}
We will encounter a situation while designing the Bayesian SMC where we cannot obtain samples from $Y_i$ itself, but can obtain samples from a distribution close to $Y_i$. 
Hence, we present a Bayesian hypothesis test for the parameters of $\bar{\sample{Y}}$ using samples from these approximate distributions.
Consider the case when sampling directly from the $n$-dimensional random vector $\bar{\sample{Y}}=(Y_1,\dots,Y_n)$ with parameter $\bar{\sample{\Theta}}$ is not possible. Suppose instead, we can sample from the $n$-dimensional random vector $\bar{\sample{Z}} = (Z_1,\dots,Z_n)$ with parameter vector $\bar{\sample{\Theta}}' = (\Theta'_1, \cdots, \Theta'_n)$ with the following property:
\begin{align}\label{corYZ}
	&\prob(Z_i=0\mid Y_i=1)\leq\alpha^\prime_i\quad\text{and}\nonumber\\ &\prob(Z_i=1\mid 
	Y_i=0)\leq\beta^\prime_i.
\end{align}
for all $i\in [n]$, where $\alpha^\prime_i,\beta^\prime_i\in(0,1)$ are small positive quantities. In other words, the distributions of $\bar{\sample{Y}}$ and $\bar{\sample{Z}}$ are ``close".

% Given this relation between random variables $Y_i$ and $Z_i$ for each $i$, we can relate their 
% parameters as well using the following proposition \cite{lal2020bayesian,younes2002probabilistic}.
% \begin{proposition}\label{relate_theta}
%     \sd{Let $Y_i$ and $Z_i$ be Bernoulli random variables with parameters $\Theta_i$ and $\Theta'_i$ respectively, i.e., $Y_i\sim \bernou(\Theta_i)$ and $Z_i\sim \bernou(\Theta^\prime_i)$. If Equation \ref{corYZ} 
% 	holds then,}
% 	\begin{equation*}
% 		(1-\alpha^\prime_i)\Theta_i\leq\Theta^\prime_i\leq\Theta_i+\beta^\prime_i(1-\Theta_i).
% 	\end{equation*}
% \end{proposition}
Given this relation between random variables $Y_i$ and $Z_i$ for each $i$, the following proposition (a direct corollary of Proposition $1$ in \cite{lal2020bayesian}) bounds the distance (associated with the infinite norm) between $\bar{\sample{\Theta}}$ and $\bar{\sample{\Theta}}^\prime$.
%direct corollary of Proposition \ref{relate_theta} 
\begin{proposition}\label{cor_relate_theta}
    Let $\bar{\sample{Y}}$ and $\bar{\sample{Z}}$ be $n$-dimensional random vectors consisting of Bernoulli random variables, with parameters $\bar{\sample{\Theta}}$ and $\bar{\sample{\Theta}}^\prime$ respectively. In other words, $\bar{\sample{Y}}[i]\sim\bernou(\bar{\sample{\Theta}}[i])$ and $\bar{\sample{Z}}[i]\sim\bernou(\bar{\sample{\Theta}}^\prime[i])$ for all $i\in[n]$. Suppose Equation \ref{corYZ} holds for each $i\in[n]$, i.e.,
    \begin{align*}
	&\prob(\bar{\sample{Z}}[i]=0\mid \bar{\sample{Y}}[i]=1)\leq\alpha^\prime_i\quad\text{and}\\ &\prob(\bar{\sample{Z}}[i]=1\mid 
	\bar{\sample{Y}}[i]=0)\leq\beta^\prime_i, \quad\forall i\in[n].
    \end{align*}
    Then, $\norm{\bar{\sample{\Theta}}-\bar{\sample{\Theta}}^\prime}{\infty} \leq \delta$ where, $\delta = \max_{i=1}^n\{\max\{\alpha_i^\prime,\beta_i^\prime\}\}$.
\end{proposition}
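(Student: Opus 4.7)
My plan is to reduce the $n$-dimensional claim to $n$ independent one-dimensional claims, one per coordinate, since the infinity norm is just the coordinate-wise maximum. So it suffices to show that for each fixed $i \in [n]$, $|\bar{\sample{\Theta}}[i] - \bar{\sample{\Theta}}^\prime[i]| \leq \max\{\alpha'_i, \beta'_i\}$; taking the max over $i$ then immediately yields the stated $\delta$ bound. The statement even credits this to Proposition~1 of \cite{lal2020bayesian}, so the real content is just the single-coordinate inequality plus the packaging into $\norm{\cdot}{\infty}$.

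For a fixed $i$, write $\Theta_i := \bar{\sample{\Theta}}[i]$ and $\Theta'_i := \bar{\sample{\Theta}}^\prime[i]$. I would use the law of total probability, conditioning $Z_i$ on the two possible values of $Y_i$, to express
$$\Theta'_i = P(Z_i = 1) = (1 - a_i)\,\Theta_i + b_i\,(1 - \Theta_i),$$
where $a_i := P(Z_i = 0 \mid Y_i = 1) \leq \alpha'_i$ and $b_i := P(Z_i = 1 \mid Y_i = 0) \leq \beta'_i$. Rearranging gives $\Theta'_i - \Theta_i = b_i(1-\Theta_i) - a_i \Theta_i$, which is a difference of two nonnegative quantities whose magnitudes are at most $b_i$ and $a_i$ respectively.

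The main (and really only) subtlety is how to bound this difference. A naive triangle-inequality step gives $|\Theta'_i - \Theta_i| \leq a_i + b_i$, which is too weak to match $\delta$. Instead, since $b_i(1-\Theta_i)$ and $a_i \Theta_i$ enter with opposite signs, I would use the elementary fact that $|u - v| \leq \max\{u,v\}$ for $u,v \geq 0$ to conclude $|\Theta'_i - \Theta_i| \leq \max\{a_i \Theta_i,\, b_i(1-\Theta_i)\} \leq \max\{a_i, b_i\} \leq \max\{\alpha'_i, \beta'_i\}$. Taking the maximum over $i$ finishes the proof. I do not anticipate any obstacle: the whole argument is one conditional-probability identity together with the sign-cancellation observation that tightens the trivial bound.
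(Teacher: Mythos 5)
Your proof is correct and takes essentially the same route as the paper: bound each coordinate by $\max\{\alpha'_i,\beta'_i\}$ and then take the maximum over $i$ to get the infinity-norm bound $\delta$. The only difference is that you derive the one-dimensional bounds yourself from the total-probability identity $\Theta'_i=(1-a_i)\Theta_i+b_i(1-\Theta_i)$ (which immediately gives $\Theta_i-\Theta'_i\le a_i\Theta_i\le\alpha'_i$ and $\Theta'_i-\Theta_i\le b_i(1-\Theta_i)\le\beta'_i$), whereas the paper imports exactly these inequalities, $(1-\alpha'_i)\Theta_i\le\Theta'_i\le\Theta_i+\beta'_i(1-\Theta_i)$, from Proposition 1 of \cite{lal2020bayesian}; your version is merely more self-contained.
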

\begin{proof}
	From Proposition 1 in \cite{lal2020bayesian} we have,
	\begin{align*}
	    &(1-\alpha^\prime_i)\bar{\sample{\Theta}}[i]\leq\bar{\sample{\Theta}}^\prime[i]\\
		\Rightarrow & (\bar{\sample{\Theta}}[i] - \bar{\sample{\Theta}}^\prime[i]) \leq \alpha^\prime_i\cdot\bar{\sample{\Theta}}[i] \leq \alpha^\prime_i,
	\end{align*}
	since $\Theta_i \leq 1$. Similarly, we also have,
	\begin{align*}
	&\bar{\sample{\Theta}}^\prime[i]\leq\bar{\sample{\Theta}}[i]+\beta^\prime_i(1-\bar{\sample{\Theta}}[i])\\
	\Rightarrow &(\bar{\sample{\Theta}}^\prime[i] - \bar{\sample{\Theta}}[i]) \leq \beta^\prime_i(1-\bar{\sample{\Theta}}[i]) \leq \beta^\prime_i,
	\end{align*}
	since $(1-\bar{\sample{\Theta}}[i])\leq 1$. Hence,
	\begin{align*}
	\size{\bar{\sample{\Theta}}[i] - \bar{\sample{\Theta}}^\prime[i]}&\leq\max\{(\bar{\sample{\Theta}}^\prime[i] - 
	\bar{\sample{\Theta}}[i]),(\bar{\sample{\Theta}}[i] - \bar{\sample{\Theta}}^\prime[i])\}\\
	&\leq\max\{\alpha^\prime_i,\beta^\prime_i\}
	\end{align*}
	for all $i\in[n]$. Thus, $\norm{\bar{\sample{\Theta}}-\bar{\sample{\Theta}}^\prime}{\infty} \leq \delta$ where $\delta =
    \max_{i=1}^n\{\max\{\alpha_i^\prime,\beta_i^\prime\}\}$.
\end{proof}

Our next goal is to device a test to deduce whether $\nullhyp:\bar{\sample{\Theta}}\in D$ or 
$\althyp:\bar{\sample{\Theta}}\in\complmnt{D}$ is satisfied using samples from 
$\bar{\sample{Z}}=(Z_1,\dots,Z_n)$.
To this end, let us first define $\epsilon$-expansion and $\epsilon$-reduction of a set $A$ as, $\extset{A}{\epsilon} = \{x\in\samsp\mid 
\ball{\epsilon}{x}\intersect A\neq \emptyset\}$ and $\deductset{A}{\epsilon} = 
\samsp\setminus\extset{(\complmnt{A})}{\epsilon}$ respectively, where $\samsp$ is the sample space for $\bar{\sample{\Theta}}$. In other words, $\extset{A}{\epsilon}$ is the set of all $x\in\samsp$ such that, the $\epsilon$-ball around $x$ has non-null intersection with the set $A$. Similarly, $\deductset{A}{\epsilon}$ is the complement set of all $x\in\samsp$ such that, the $\epsilon$-ball around $x$ has non-null intersection with complement of $A$. 
Provided $\prob(D)\in(0,1)$, we have the following theorem which provides us a test for $\bar{\sample{\Theta}}$, parameter of $\bar{\sample{Y}}$, using samples from $\bar{\sample{Z}}$.

\begin{theorem}\label{approx_bayes}[Approximate Bayes' test]
	Let $\bar{\sample{Y}}$ and $\bar{\sample{Z}}$ be discrete random vectors \sd{($\bar{\sample{Y}}[i],\bar{\sample{Z}}[i]$ are Bernoulli random variables for all $i$)} of same dimension with 
	parameters $\bar{\sample{\Theta}}$ and $\bar{\sample{\Theta}}^\prime$ respectively. 
	Also, let $\norm{\bar{\sample{\Theta}}-\bar{\sample{\Theta}}^\prime}{\infty}\leq \delta$. 
	Let, \[\nullhyp:\bar{\sample{\Theta}}\in D \text{ vs } \althyp:\bar{\sample{\Theta}}\in\complmnt{D}\] be the null and 
	alternate hypotheses respectively. 
	Consider the test that 
	\begin{itemize}
	\item accepts $\nullhyp$ when 
	$\bayes_{\bar{\sample{Z}}}(\bar{\sample{z}},\deductset{D}{\delta})\geq\frac{1}{\beta\cdot r_2}$ and 
	\item rejects $\nullhyp$ when 
	$\bayes_{\bar{\sample{Z}}}(\bar{\sample{z}},\extset{D}{\delta})\leq\alpha\cdot r_1$,
	\end{itemize}
	\sd{where $r_1$ and $r_2$ are constants defined as,
	\begin{equation*}
		r_1=\frac{\prob(\bar{\sample{\Theta}}\in D)}
		{\prob(\bar{\sample{\Theta}}\in \extset{D}{2\delta})} \quad\text{and}\quad
		r_2=\frac{\prob(\bar{\sample{\Theta}}\in \complmnt{D})}
		{\prob(\bar{\sample{\Theta}}\in \extset{(\complmnt{D})}{2\delta})}
	\end{equation*}
	Then, $\alpha$ and $\beta$ are the upper bounds of Type-I and Type-II errors respectively.}
\end{theorem}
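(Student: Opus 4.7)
The plan is to reduce the claim to two applications of the original Bayes' test (Theorem~\ref{bayes}) on the observed vector $\bar{\sample{Z}}$, and then to lift the resulting error guarantees from statements about $\bar{\sample{\Theta}}'$ (the parameter of $\bar{\sample{Z}}$) to the desired statements about $\bar{\sample{\Theta}}$. Two containments from the triangle inequality drive the lift: since $\norm{\bar{\sample{\Theta}}-\bar{\sample{\Theta}}'}{\infty}\le\delta$, $\bar{\sample{\Theta}}\in D$ implies $\bar{\sample{\Theta}}'\in\extset{D}{\delta}$, and $\bar{\sample{\Theta}}\in\complmnt{D}$ implies $\bar{\sample{\Theta}}'\in\extset{(\complmnt{D})}{\delta}=\complmnt{(\deductset{D}{\delta})}$. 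These inclusions precisely identify the two sets ($\extset{D}{\delta}$ and $\deductset{D}{\delta}$) that appear in the test.

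For the Type-I bound, I apply Theorem~\ref{bayes} to $\bar{\sample{Z}}$ with null hypothesis $\bar{\sample{\Theta}}'\in\extset{D}{\delta}$ and rejection threshold $\alpha r_1$, yielding $\prob(A\mid C)\le \alpha r_1$, where $A$ is the approximate test's rejection event and $C=\{\bar{\sample{\Theta}}'\in\extset{D}{\delta}\}$. Writing $B=\{\bar{\sample{\Theta}}\in D\}$, the inclusion $B\subseteq C$ gives $\prob(A\cap B)\le\prob(A\cap C)\le \alpha r_1\cdot\prob(C)$, hence $\prob(A\mid B)\le \alpha r_1\cdot\prob(C)/\prob(B)$. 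A second triangle-inequality step shows that $\bar{\sample{\Theta}}'\in\extset{D}{\delta}$ implies $\bar{\sample{\Theta}}\in\extset{D}{2\delta}$, so $\prob(C)\le \prob(\bar{\sample{\Theta}}\in\extset{D}{2\delta})$ and therefore $\prob(C)/\prob(B)\le 1/r_1$. Multiplying through gives Type-I error at most $\alpha$.

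The Type-II bound is symmetric: I apply Theorem~\ref{bayes} with null hypothesis $\bar{\sample{\Theta}}'\in\deductset{D}{\delta}$ and acceptance threshold $1/(\beta r_2)$ to conclude that the acceptance event has conditional probability at most $\beta r_2$ on $C'=\complmnt{(\deductset{D}{\delta})}=\extset{(\complmnt{D})}{\delta}$, then use $\{\bar{\sample{\Theta}}\in\complmnt{D}\}\subseteq C'$ together with the analogous implication $\bar{\sample{\Theta}}'\in\extset{(\complmnt{D})}{\delta}\Rightarrow \bar{\sample{\Theta}}\in\extset{(\complmnt{D})}{2\delta}$ to bound $\prob(C')/\prob(\bar{\sample{\Theta}}\in\complmnt{D})\le 1/r_2$, yielding Type-II error at most $\beta$. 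The main subtlety is the two-tier radius accounting: the test itself uses $\delta$-expanded/reduced sets because of the one-step parameter proximity supplied by Proposition~\ref{cor_relate_theta}, whereas the normalizing constants $r_1,r_2$ involve $2\delta$-expanded sets because a second triangle-inequality step is needed to transport a statement about $\bar{\sample{\Theta}}'$ back to $\bar{\sample{\Theta}}$; the ratios $r_1,r_2$ are arranged exactly so that they absorb the distortion introduced by conditioning on the smaller event $\{\bar{\sample{\Theta}}\in D\}$ (respectively its complement) rather than on $\{\bar{\sample{\Theta}}'\in\extset{D}{\delta}\}$.
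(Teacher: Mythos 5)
Your proposal is correct and follows essentially the same route as the paper's proof: the same two containments $\{\bar{\sample{\Theta}}\in D\}\subseteq\{\bar{\sample{\Theta}}'\in\extset{D}{\delta}\}\subseteq\{\bar{\sample{\Theta}}\in\extset{D}{2\delta}\}$, the same renormalization by $r_1$ (and symmetrically $r_2$), and the same $\delta$-vs-$2\delta$ bookkeeping. The only cosmetic difference is that you invoke Theorem~\ref{bayes} as a black box for the conditional rejection/acceptance probability given $\bar{\sample{\Theta}}'\in\extset{D}{\delta}$ (resp.\ $\complmnt{(\deductset{D}{\delta})}$), whereas the paper inlines that step by unfolding the Bayes-factor inequality directly.
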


\begin{proof}
Let us show that $\alpha$ is the Type-I error bound for the hypothesis test, that is, we show that $\prob(\{\text{reject }\nullhyp\}\mid \nullhyp) \leq \alpha$.

\sd{Note that, from Proposition \ref{cor_relate_theta}, $\bar{\sample{\Theta}}(\omega)\in D$ implies $\bar{\sample{\Theta}}^\prime(\omega)\in \extset{D}{\delta}$ and,  $\bar{\sample{\Theta}}^\prime(\omega)\in \extset{D}{\delta}$ implies 
$\bar{\sample{\Theta}}(\omega)\in \extset{D}{2\delta}$, 
as $\norm{\bar{\sample{\Theta}}-\bar{\sample{\Theta}}^\prime}{\infty}\leq\delta$. Hence, $\{\omega\mid \bar{\sample{\Theta}}(\omega)\in D\}\subseteq \{\omega\mid \bar{\sample{\Theta}}^\prime(\omega)\in \extset{D}{\delta}\}
\subseteq \{\omega\mid \bar{\sample{\Theta}}(\omega)\in\extset{D}{2\delta}\}$.}
\begin{align*}
    &\prob(\{\text{reject }\nullhyp\}\mid \nullhyp) = 
    \prob\left(\bayes_{\bar{\sample{Z}}}\left(\bar{\sample{z}},
    \extset{D}{\delta}\right)\leq\alpha\cdot r_1\mid \bar{\sample{\Theta}}\in D\right)\\
    &=\frac{\prob\left(\bayes_{\bar{\sample{Z}}}\left(\bar{\sample{z}},
    \extset{D}{\delta}\right)\leq\alpha\cdot r_1,\bar{\sample{\Theta}}\in D\right)}
    {\prob(\bar{\sample{\Theta}}\in D)}\\
    &\leq \frac{\prob\left(\bayes_{\bar{\sample{Z}}}\left(\bar{\sample{z}},
    \extset{D}{\delta}\right)\leq\alpha\cdot r_1,\bar{\sample{\Theta}}^\prime\in \extset{D}{\delta}\right)}
    {\prob(\bar{\sample{\Theta}}\in D)}\\
    &= \frac{\prob\left(\bayes_{\bar{\sample{Z}}}\left(\bar{\sample{z}},
    \extset{D}{\delta}\right)\leq\alpha\cdot r_1,\bar{\sample{\Theta}}^\prime\in \extset{D}{\delta}\right)}
    {\prob\left(\bar{\sample{\Theta}}^\prime\in \extset{D}{\delta}\right)}\cdot
    \frac{\prob(\bar{\sample{\Theta}}^\prime\in \extset{D}{\delta})}
    {\prob\left(\bar{\sample{\Theta}}\in D\right)}\\
    &\leq\sd{\prob\left(\bayes_{\bar{\sample{Z}}}\left(\bar{\sample{z}},
    \extset{D}{\delta}\right)\leq\alpha\cdot r_1\mid \bar{\sample{\Theta}}^\prime\in \extset{D}{\delta}\right) 
    \frac{\prob(\bar{\sample{\Theta}}\in \extset{D}{2\delta})}
    {\prob\left(\bar{\sample{\Theta}}\in D\right)}}\\
    &=\frac{\prob\left(\bayes_{\bar{\sample{Z}}}\left(\bar{\sample{z}},
    \extset{D}{\delta}\right)\leq\alpha\cdot r_1\mid \bar{\sample{\Theta}}^\prime\in \extset{D}{\delta}\right)}{r_1}\\
    % &
    % \begin{aligned}
    % \   \   \   \   \   \   \   \   \   \   \   \   \   \   \   
    % \   \   \   \   \   \   \   \   \   \   \   \   \   \   \   
    % \   \   \   \   \   \   \   \   
    % \left[\text{from Proposition \ref{ext_bound}}\right]
    % \end{aligned}\\
    &\leq \frac{\alpha\cdot r_1}{r_1}
    \cdot\prob\left(\bar{\sample{z}}\mid \bar{\sample{\Theta}}^\prime\in 
			\complmnt{(\extset{D}{\delta})}\right)\\
    &
    \begin{aligned}
    \   \   \   \   \   \   \   \   \   
    \text{[since }
    \bayes_{\bar{\sample{Z}}}(\bar{\sample{z}},\extset{D}{\delta})
    \leq r_1\alpha\text{ iff }
    \prob\left(\bar{\sample{z}}\mid \bar{\sample{\Theta}}^\prime\in 
			\extset{D}{\delta}\right)\\
			\leq r_1\alpha\cdot\prob\left(\bar{\sample{z}}\mid \bar{\sample{\Theta}}^\prime\in 
			\complmnt{(\extset{D}{\delta})}\right)
    \text{]}
    \end{aligned}\\
    &\leq \alpha\quad [\text{since }\prob\left(\bar{\sample{z}}\mid \bar{\sample{\Theta}}^\prime\in 
			\complmnt{(\extset{D}{\delta})}\right)\leq 1]
\end{align*}
\sd{Similarly, we can show that $\beta$ is the Type-II error bound for the hypothesis test, i.e., $\prob(\{\text{accept }\nullhyp\}\mid \althyp)\leq \beta$.} 

\end{proof}

\begin{remark}
Note that, Theorem \ref{approx_bayes} (approximate Bayes' test) essentially tests \[\nullhyp:\bar{\sample{\Theta}}'\in \deductset{D}{\delta} \text{ vs } \althyp:\bar{\sample{\Theta}}'\in (\extset{D}{\delta})^c.\] 
\sd{Now since $\prob(D)\in(0,1)$ and $D\subseteq \extset{D}{2\delta}$, $r_1$ is well defined and lies within $(0,1]$. Similarly, since $\prob(\complmnt{D}) = 1 - \prob(D)\in(0,1)$ and $\complmnt{D}\subseteq\extset{(\complmnt{D})}{2\delta}$, $r_2$ is well defined and also lies within $(0,1]$. 
This implies $\alpha r_1\leq \alpha$ and $\beta r_2\leq\beta$.
Thus, the acceptance/rejection conditions 
	$\bayes_{\bar{\sample{Z}}}(\bar{\sample{z}},\deductset{D}{\delta})\geq\frac{1}{\beta\cdot r_2}$ and 
	$\bayes_{\bar{\sample{Z}}}(\bar{\sample{z}},\extset{D}{\delta})\leq\alpha\cdot r_1$, using samples from $\bar{\sample{Z}}$, are stricter than those using $\bar{\sample{Y}}$. For acceptance, the region $D$ is shrunk and Bayes' statistic is expected to be larger than a larger threshold ($1/\beta r_2\geq 1/\beta$) and for rejection, the region $D$ is expanded and Bayes' statistic is expected to be smaller than a smaller threshold ($\alpha r_1\leq\alpha$).}
	
Further, the region $\extset{D}{\delta}\setminus\deductset{D}{\delta}$ is an \emph{indifference region} in the sense that, if $\bar{\sample{\Theta}}'\in\extset{D}{\delta}\setminus\deductset{D}{\delta}$, then Theorem \ref{approx_bayes} can neither accept nor reject $\nullhyp$. This is why, some nested formulae cannot be verified by the approximate Bayes' test. Hence, we should only use the approximate Bayes' test if $\bar{\sample{\Theta}}'$ is not too close to the boundary of the region $D$. Otherwise, approximate Bayes' test will not pass the acceptance/rejection criterion.  This problem arises when we verify nested $\hyppctl$ formulae using approximate Bayes' test, but not when we verify  non-nested formulae using the classical Bayes' test.

% Thus, the region $\extset{D}{\delta}\setminus\deductset{D}{\delta}$ is an \emph{indifference region} in the sense that, if $\bar{\sample{\Theta}}\in\extset{D}{\delta}\setminus\deductset{D}{\delta}$, then Theorem \ref{approx_bayes} can neither accept nor reject $\nullhyp$. This is why, some nested formulae cannot be verified by the BayesSMC algorithm (Algorithm \ref{nest}). Hence, we should only use BayesSMC (Algorithm \ref{nest}) for a nested formula if $\bar{\sample{\Theta}}$ is not too close to the boundary of the region $D$. Otherwise, BayesSMC (Algorithm \ref{nest}) will return Undecided. This problem does not arise for verification of non-nested formulae.
\end{remark}

\sd{
\subsection{Hypothesis Testing by SPRT}\label{sec: SPRT_hyp_test}
Since we are comparing our approach to $\sprt$ based $\smc$, we provide a short description of $\sprt$ based hypothesis testing \cite{wang2021statistical}.
In $\sprt$, parameter $\bar{\sample{\Theta}}$ is assumed to be fixed and we decide between two most \emph{indistinguishable hypotheses} instead of the original hypotheses (Equation \ref{eq: orig_hyp}). More precisely, we test 
\begin{equation*}\label{eq: indisting_hyp}
	\nullhyp': \bar{\sample{\Theta}}\in \deductset{D}{\epsilon}\quad\text{vs}\quad 
	\althyp': \bar{\sample{\Theta}}\in\complmnt{(\extset{D}{\epsilon})},
\end{equation*}
for some small $\epsilon>0$. 
A simpler statistic based on the log-likelihood function and Kullback-Leibler divergence is devised and $\nullhyp'$ or $\althyp'$ is accepted based on the position of the maximum likelihood estimate of $\bar{\sample{\Theta}}$ and the value of this statistic.
Note that, $\extset{D}{\epsilon}\setminus\deductset{D}{\epsilon}$ is an \emph{indifference region} in the sense that, if $\bar{\sample{\Theta}}\in \extset{D}{\epsilon}\setminus\deductset{D}{\epsilon}$, then we cannot test $\nullhyp$ vs $\althyp$ (Equation \ref{eq: orig_hyp}) using $\sprt$.

}
\section{Statistical Model Checking}\label{algo}
We will now discuss model checking of probabilistic hyperproperties using Bayes' test and approximate Bayes' test. Suppose a formula $\psi$, a $\dtmc$ $\M$ and a path assignment $\assign$ is given. We refer to a $\hyppctl$ formula as probabilistic if the top-level operator is $\pred_D$. Let us note that, only two cases might arise for a probabilistic $\hyppctl$ formula. On one hand, the formula can be non-nested, i.e., of the form $\pred_D(\syntaxpr^{\vecpi_1}(\phi_1),\dots,\syntaxpr^{\vecpi_n}(\phi_n))$ where no $\phi_i$ contains a probabilistic subformula. On the other hand, a formula can be nested, i.e., of the form $\pred_D(\syntaxpr^{\vecpi_1}(\phi_1),\dots,\syntaxpr^{\vecpi_n}(\phi_n))$ where some $\phi_i$ contains one or more probabilistic subformulae.

\subsection{Verifying Non-nested Probabilistic Formula}\label{sec: verify_non_nest}
As the base case, we will first discuss the verification of non-nested probabilistic formula on a $\dtmc$. Let us 
assume, we have a formula $\psi$ of the form 
$\pred_D(\syntaxpr^{\vecpi_1}(\phi_1),\dots,\syntaxpr^{\vecpi_n}(\phi_n))$ where each 
$\phi_i$ is non-probabilistic (without $\pred$ operator). We also assume each $\phi_i$ is closed by 
${\vecpi_i}$, i.e., $\pathvars_{\phi_i}\subseteq {\vecpi_i}$. Now let us define Bernoulli random variables 
$Y_{i=1,\dots,n}:\ipath{\assign,\vecpi_i}\rightarrow\{0,1\}$ as,
\begin{equation*}
	Y_i(\vecpath_i) =
	\begin{cases}
		1 \text{ if } (\M,\assign[\vecpi_i\rightarrow\vecpath_i])\models \phi_i\\
		0 \text{ otherwise}.
	\end{cases}
\end{equation*}
Then $\semantics{\syntaxpr^{\vecpi_i}(\phi_i)}{\assign} = \prob(Y_i = 1) = \Theta_i$, where 
$\Theta_i$ is the Bernoulli parameter of $Y_i$. Now our verification problem can be restated as a 
hypotheses testing problem,
\begin{equation*}
	\nullhyp: \bar{\sample{\Theta}}\in D\quad\text{vs}\quad \althyp: \bar{\sample{\Theta}}\in\complmnt{D},
\end{equation*}
where $\bar{\sample{\Theta}} = (\Theta_1,\dots,\Theta_n)$ is the parameter for the $n$ dimensional random vector $\bar{\sample{Y}} = (Y_1,\dots,Y_n)$. We say $\pdf$ is the prior for $\Theta_i$ if $\Theta_i$ is distributed with pdf $\pdf$. In Bayesian framework, $\pdf$ assumed to be known.
Assuming all $\Theta_i$ has the same prior $\pdf$, we can easily compute the prior for 
$\bar{\sample{\Theta}}$, say $\pdf_{\bar{\sample{\Theta}}}$, where 
$\pdf_{\bar{\sample{\Theta}}}(\theta_1,\dots,\theta_n) = \prod_{i=1}^n\pdf(\theta_i)$.
We can now apply Bayes' test repeatedly on larger and larger 
samples to deduce whether $\nullhyp$ or $\althyp$ holds with certain Type-I and Type-II error bounds. 
This procedure is described in the following algorithm.

\begin{algorithm}[H]
	\begin{algorithmic}[1]
	\renewcommand{\algorithmicrequire}{\textbf{Input:}}
    \renewcommand{\algorithmicensure}{\textbf{Output:}}
		\Require $\M$: $\dtmc$, $\psi = \pred_D(\syntaxpr^{\vecpi_1}(\phi_1),\dots,\syntaxpr^{\vecpi_n}(\phi_n))$: non-nested $\hyppctl$ formula, $\assign$: path assignment, $a,b$: parameters for $\Beta$ prior, $\alpha,\beta$: bounds on Type-I and Type-II errors
		\Ensure Answer if $\M,\assign\models\psi$ with confidence $\alpha,\beta$
		\State $N\gets 1$
		\While{True}
		    \State /*Generate data $\bar{\sample{y}}=(\sample{y_1},\dots,\sample{y_N})$ for $\bar{\sample{Y}}$*/
		    \For{$k = 1$ to $N$}
		       \For{$i=1$ to $n$} 
		        \State Randomly sample $\vecpath_i$ from $\ipath{\assign,\vecpi_i}$
		        \If{$(\M,\assign[\vecpi_i\rightarrow\vecpath_i])\models \phi_i$}
		            \State $\sample{y_k}[i]\gets 1$
		        \Else
		            \State $\sample{y_k}[i]\gets 0$
		        \EndIf
		       \EndFor
		    \EndFor
			\State Calculate $\bayes_{\bar{\sample{Y}}}(\bar{\sample{y}},D)$ using Equation \ref{bayes_fact}
			\If{$\bayes_{\bar{\sample{Y}}}(\bar{\sample{y}},D) \geq 1/\beta$}
			\State Return True
			\ElsIf{$\bayes_{\bar{\sample{Y}}}(\bar{\sample{y}},D)\leq \alpha$}
			\State Return False
			\Else
			\State $N\gets 2\cdot N$
			\EndIf
		\EndWhile
	\end{algorithmic}
	\caption{BaseBayes: $\smc$ of non-nested $\hyppctl$ formula}
	\label{non-nest}
\end{algorithm}
The correctness of Algorithm \ref{non-nest} follows directly from Theorem \ref{bayes} (Bayes' test).
\begin{theorem}\label{correct_base}
[Correctness of BaseBayes]
    Let $\M$, $\psi$, $\assign$, $a$, $b$, $\alpha$ and $\beta$, be as in Algorithm \ref{non-nest}. 
    If $(\M,\assign)\models \psi$, then Algorithm
    \ref{non-nest} outputs True with probability at least $(1 -
    \alpha)$.
     If $(\M,\assign)\not\models\psi$, then Algorithm
    \ref{non-nest} outputs False with probability at least $(1 - \beta)$.
\end{theorem}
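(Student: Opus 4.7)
The plan is to recognize that Algorithm \ref{non-nest} is a direct (sequential) instantiation of the Bayes' test from Theorem \ref{bayes}, applied to the Bernoulli random vector $\bar{\sample{Y}} = (Y_1,\dots,Y_n)$ constructed just above the algorithm. The key observation is that the semantic statement $(\M,\assign)\models\psi$ is literally the null hypothesis: by the definition of the $Y_i$, we have $\semantics{\syntaxpr^{\vecpi_i}(\phi_i)}{\assign} = \prob(Y_i=1) = \Theta_i$, so $(\M,\assign)\models\psi$ iff $(\Theta_1,\dots,\Theta_n)\in D$, i.e.\ iff $\nullhyp$ holds.

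First I would justify the sampling step. In the inner loop, for each $i$, a path tuple $\vecpath_i$ is drawn from the DTMC-induced probability measure on $\ipath{\assign,\vecpi_i}$, and $\sample{y_k}[i]$ is set to the indicator of $(\M,\assign[\vecpi_i\rightarrow\vecpath_i])\models\phi_i$. By the definition of $Y_i$ this produces an i.i.d.\ sample from $Y_i$, so the collected $\bar{\sample{y}} = (\sample{y_1},\dots,\sample{y_N})$ is a genuine size-$N$ sample of $\bar{\sample{Y}}$. The prior $\pdf_{\bar{\sample{\Theta}}}$ used to evaluate the Bayes' factor in line~14 is the product prior $\prod_i \pdf(\theta_i)$ with $\pdf = \dbeta(a,b)$ on each coordinate, as discussed just before the algorithm.

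Next I would match the branching in lines 14--17 to the two clauses of Theorem \ref{bayes}. Returning \textsf{True} happens exactly when $\bayes_{\bar{\sample{Y}}}(\bar{\sample{y}},D)\geq 1/\beta$, which is the acceptance clause of the Bayes' test, and returning \textsf{False} happens exactly when $\bayes_{\bar{\sample{Y}}}(\bar{\sample{y}},D)\leq \alpha$, which is the rejection clause. Theorem \ref{bayes} therefore gives $\prob(\text{reject }\nullhyp \mid \nullhyp) \leq \alpha$ and $\prob(\text{accept }\nullhyp \mid \althyp) \leq \beta$. Translating back to the model checking statement: if $(\M,\assign)\models\psi$ then $\nullhyp$ holds, so the algorithm returns \textsf{False} with probability at most $\alpha$ and hence \textsf{True} with probability at least $1-\alpha$; symmetrically, if $(\M,\assign)\not\models\psi$ then $\althyp$ holds, so \textsf{True} is returned with probability at most $\beta$, giving \textsf{False} with probability at least $1-\beta$.

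The one subtle point I would flag is the gap between the one-shot formulation of Theorem \ref{bayes} and the sequential loop of the algorithm (doubling $N$ until a decision is reached). I would handle this by noting that each iteration is an independent Bayes' test on fresh samples of size $N$; since the algorithm only emits a verdict when one of the two thresholds is crossed, the per-iteration Type-I/Type-II bounds in Theorem \ref{bayes} transfer to the overall procedure, which is the standard soundness guarantee used for Bayes-factor based sequential tests in prior work such as \cite{lal2020bayesian,zuliani2013bayesian}. This is the only nontrivial step; everything else is bookkeeping that matches the algorithm's variables to those of Theorem \ref{bayes}.
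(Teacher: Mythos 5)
Your proposal matches the paper's argument exactly: the paper offers no written proof beyond the remark that correctness ``follows directly from Theorem \ref{bayes}'', and your reduction --- identifying $(\M,\assign)\models\psi$ with $\nullhyp:\bar{\sample{\Theta}}\in D$ via the Bernoulli variables $Y_i$, checking that the sampling loop yields genuine samples of $\bar{\sample{Y}}$ under the product $\dbeta(a,b)$ prior, and mapping the return conditions to the acceptance/rejection clauses of the Bayes' test --- is precisely that argument spelled out. Your explicit flagging of the sequential (doubling-$N$) aspect goes beyond what the paper addresses, and your treatment of it is in line with how the cited Bayesian SMC literature handles the same point, so the proposal is fine as it stands.
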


\subsection{Verifying Nested Probabilistic Formula}
We now describe our general $\smc$ algorithm BayesSMC$(\M,\psi,\assign,a,b,\alpha,\beta)$ which can verify nested probabilistic formula on a $\dtmc$. Here $\M$ is a $\dtmc$, $\psi$ is a (possibly) nested $\hyppctl$ formula, $\assign$ is a path assignment, $a,b$ are parameters for the $\Beta$ prior and $\alpha,\beta$ are allowed upper bounds for Type-I and Type-II errors. Let us start with an overview of the algorithm.

\subsubsection{Overview}
Consider a nested probabilistic formula $\psi$ of the form $\pred_D(\syntaxpr^{\vecpi_1}(\phi_1),\dots,\syntaxpr^{\vecpi_n}(\phi_n))$, where $\phi_i$ has probabilistic subformulae $\psi_{ij}$ for some $i\in[n]$. 
Given a $\dtmc$ $\M$ and a path assignment $\assign$, we want to verify if $(\M,\assign)\models\phi$ with Type-I and Type-II error bounds $\alpha,\beta$ respectively. 
We do this in a bottom up recursive approach where we first verify satisfaction of $\psi_{ij}$ using BayesSMC with Type-I and Type-II error bounds $\alpha_{ij},\beta_{ij}$ respectively. 
Then, we propagate the errors $\alpha_{ij},\beta_{ij}$ using error propagation rules (Property \ref{error_propagate}) in order to calculate $\alpha_i,\beta_i$, the Type-I and Type-II errors incurred by $\phi_i$.
Next, we calculate $\delta=\max_{i=1}^n\{\max\{\alpha_i,\beta_i\}\}$ from the errors incurred by the $\phi_i$ formulae.
Finally, we apply the approximate Bayes' test (Theorem \ref{approx_bayes}) using $\alpha$, $\beta$ and the derived $\delta$ to verify the satisfaction of $\psi$ on $(\M,\assign)$.

Let us explain this using the nested formula $\forreach$ (Equation \ref{gol_reach}). 
The formula tree for $\forreach$ depicting its subformulae are shown in Figure \ref{nest_tree}.
\begin{figure}[]
	\centering
	\setlength\abovecaptionskip{-10pt}
	\setlength\belowcaptionskip{-10pt}
	\includegraphics[width=10cm]{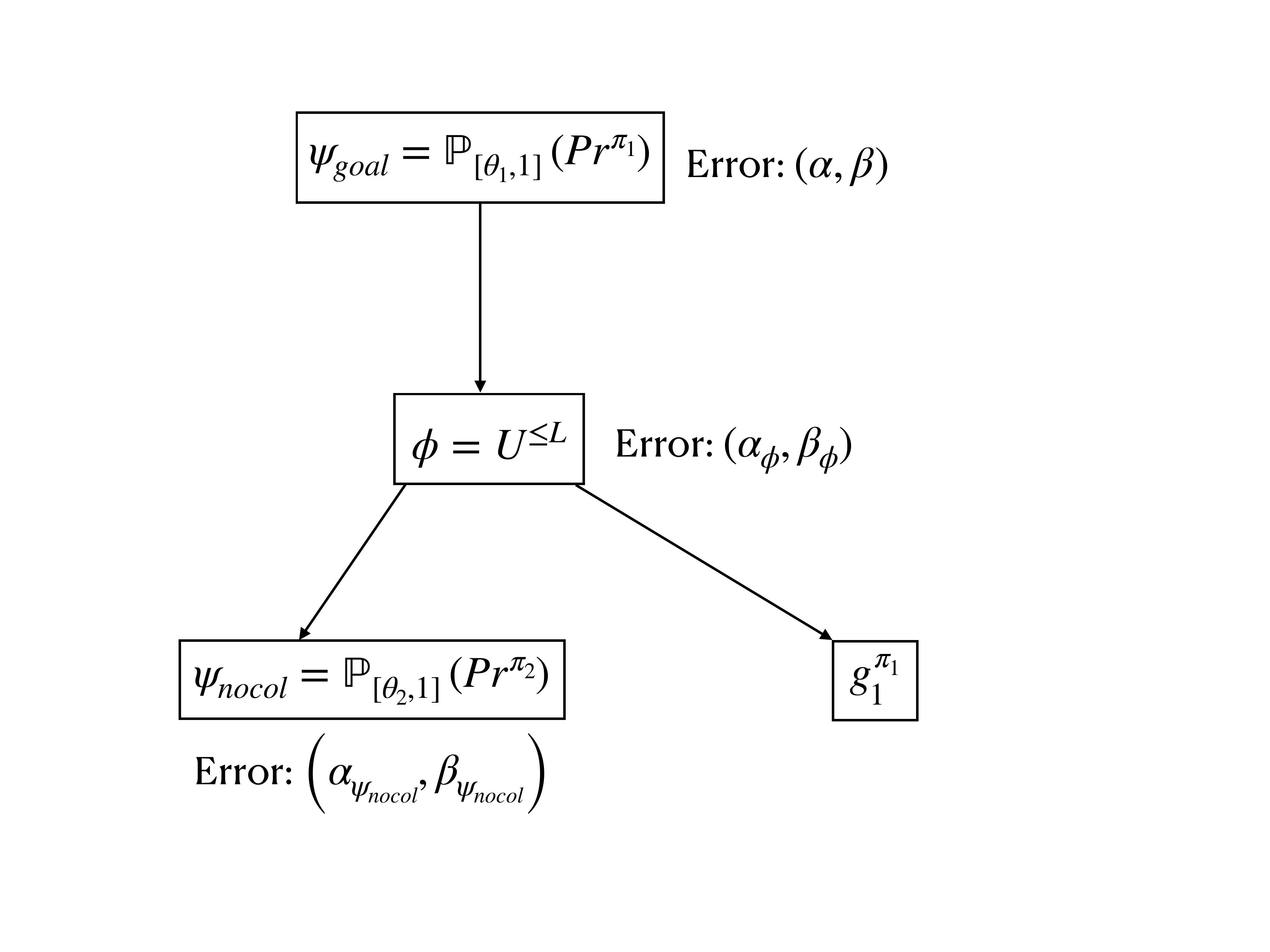}
	\caption{A nested probabilistic formula}
	\label{nest_tree}
\end{figure}
Suppose, we want to verify $\forreach$ on a $\dtmc$ $\M$ and a path assignment $\assign$ with error bounds $\alpha,\beta$. 
We first verify the non-nested formula $\psi_\nocol$ with error bounds $\alpha_{\psi_\nocol},\beta_{\psi_\nocol}$ on $(\M,\assign)$ using Bayes' test (Theorem \ref{bayes}).
Then, we calculate $\alpha_\phi,\beta_\phi$, Type-I and Type-II error bounds for the subformula $\phi = (\psi_{\nocol})\until{K} (g_1^{\pi_1})$, using error propagation rules (Property \ref{error_propagate}). Clearly, $\delta=\max\{\alpha_\phi,\beta_\phi\}$ as $\forreach$ has only one subformula $\phi$. Finally, we verify $\forreach$ on $(\M,\assign)$ using approximate Bayes' test (Theorem \ref{approx_bayes}), with parameters $\alpha$, $\beta$ and $\delta$.

\subsubsection{Error Propagation}
\label{sec:error}
\sd{Let us define the recursive rules for error propagation now. 
Error is propagated in a bottom up manner, that is, from a subformula to its parent formula.
For a formula $\psi$, let $E_1(\psi)$ denotes the Type-I error associated to $\psi$ and $E_2(\psi)$ denotes the Type-II error associated to $\psi$. We now describe error propagation for a general (possibly nested) $\hyppctl$ formula.}
% \begin{property}
% \label{error_propagate1}
% \begin{itemize}
%     \item $E_1(a^\pi)=E_2(a^\pi)=0$ for all $a\in\ap$ and $\pi\in\pathvars$;
%     \item $E_1(\neg\psi) = E_2(\psi)$, $E_2(\neg\psi) = E_1(\psi)$;
%     \item $E_1(\bigcirc\psi) = E_1(\psi)$, $E_2(\bigcirc\psi) = E_2(\psi)$;
%     \item $E_1(\psi_1 \land \psi_2) = E_1(\psi_1) + E_1(\psi_2)$, $E_2(\psi_1 \land \psi_2) = \max\{E_2(\psi_1), E_2(\psi_2)\}$;
%     \item $E_1(\psi_1\until{k}\psi_2) = k\cdot E_1(\psi_1) + E_1(\psi_2)$, $E_2(\psi_1\until{k}\psi_2) = (k + 1) \max\{E_2(\psi_1), E_2(\psi_2)\}$.
% \end{itemize}
% \end{property}

\sd{
Let $\psi = \pred_D(\syntaxpr^{\vecpi_1}(\phi_1),\dots,\syntaxpr^{\vecpi_n}(\phi_n))$ be a (possibly nested) $\hyppctl$ formula, where each $\phi_i$ can have zero or more probabilistic subformula. Let us assume, by inductive hypothesis, $E_1(\phi_i)$ and $E_2(\phi_i)$ are Type-I and Type-II errors associated to $\phi_i$. Let $\delta=\max_{i=1}^n\{\max\{E_1(\phi_i),E_2(\phi_i)\}\}$. Any $E_1(\phi_i)$ and $E_2(\phi_i)$ in $(0,1)$ are allowed as long as $\deductset{D}{\delta}\neq\emptyset$ (this is a necessary condition because to apply approximate Bayes' test, one needs to compute $\bayes_{\bar{\sample{Z}}}(\bar{\sample{z}},\deductset{D}{\delta})$ for some random vector $\bar{\sample{Z}}$, and $\bayes_{\bar{\sample{Z}}}(\bar{\sample{z}},\deductset{D}{\delta})$ is undefined in case $\deductset{D}{\delta}=\emptyset$). Thus, we have a complete recursive definition of error propagation for a general $\hyppctl$ formula given by the following property,
% We define the function $\textit{ComputeError}$ which actually implements this rule. More precisely, given Type-I and Type-II error bounds $\alpha_{\phi_i},\beta_{\phi_i}$ of each subformula $\phi_i$, $\textit{ComputeError}$ recursively computes the corresponding error bounds for the parent formula, if it is non-probabilistic, and $\delta$, if it is probabilistic. Hence, the error propagation rules are defined as:
\begin{property}\label{error_propagate}
\begin{enumerate}
    \item $E_1(a^\pi)=E_2(a^\pi)=0$ for all $a\in\ap$ and $\pi\in\pathvars$;
    \item $E_1(\neg\psi) = E_2(\psi)$, $E_2(\neg\psi) = E_1(\psi)$;
    \item $E_1(\bigcirc\psi) = E_1(\psi)$, $E_2(\bigcirc\psi) = E_2(\psi)$;
    \item $E_1(\psi_1 \land \psi_2) = E_1(\psi_1) + E_1(\psi_2)$, $E_2(\psi_1 \land \psi_2) = \max\{E_2(\psi_1), E_2(\psi_2)\}$;
    \item $E_1(\psi_1\until{k}\psi_2) = k\cdot E_1(\psi_1) + E_1(\psi_2)$, $E_2(\psi_1\until{k}\psi_2) = (k + 1) \max\{E_2(\psi_1), E_2(\psi_2)\}$.
    \item When $\psi$ is nested, $E_1(\psi),E_2(\psi)$ can be any value in $(0,1)$, whereas, $\delta$ is the maximum of all $\alpha_i,\beta_i$, where $\alpha_i,\beta_i$ are error bounds for the subformulae $\phi_i$.
\end{enumerate}
\end{property}
Note that, rules $1$-$5$ describe recursive error propagation for temporal formulae \cite{lal2020bayesian}, whereas, rule $6$ describes error propagation for (possibly nested) probabilistic formulae.

% \begin{remark}
% In the BaseBayes Algorithm, we consider only the case of a probabilistic $\hyppctl$.
% For the non-nested $\hyppctl$ formula $\rho$ with a non-probabilistic operator at the top, we can extend the
% algorithm as follows.
% To verify $\rho$ with error bounds $\alpha$ and $\beta$, we assign error bounds $\alpha_{\psi}, \beta_{\psi}$ for each of the probabilistic subformulae $\psi$ of $\rho$ such that error propagation using the rules from Property \ref{error_propagate}, will provide the desired $\alpha, \beta$ bounds on $\rho$.
% Then we check whether $(\M,\assign) \models \rho$, where we treat the probabilistic formulae as propositions with satisfiability result as returned by the BaseBayes, and return the result.
% The answer is correct with error bounds $\alpha, \beta$.
% Even in the nested case, we present the algorithm for only the probabilistic case. 
% A similar error propagation procedure will be deployed for the non-probabilistic non-nested formulae.
% \end{remark}
}

\subsubsection{The Recursive Algorithm BayesSMC}

Let $\psi=\pred_D(\syntaxpr^{\vecpi_1}(\phi_1),\dots,\syntaxpr^{\vecpi_n}(\phi_n))$ be a nested formula where, each $\phi_i$ can consist of zero or more probabilistic subformulae. We define Bernoulli random variables $Y_i$ as before. However, we cannot sample $Y_i$ directly as we use SMC to check whether $(\M,\assign[\vecpi_i\rightarrow\vecpath_i])\models \phi_i$ which introduces uncertainty (since $\phi_i$ itself contains zero or more probabilistic subformulae). 
Thus, we sample from Bernoulli random variables $Z_{i=1,\dots,n}:\ipath{\assign,\vecpi_{i}}\rightarrow\{0,1\}$ where,

\begin{align*}
	&Z_i(\vecpath) =
	\begin{cases}
		1 \text{ if } \text{BayesSMC}(\M,\phi_i,\assign^\prime,a,b,\alpha_i^\prime,\beta_i^\prime) = \text{True}\\
		0 \text{ otherwise},
	\end{cases}
\end{align*}
and $\assign^\prime = \assign[\vecpi_{i}\rightarrow\vecpath_{i}]$.
\sd{However,} in this process we incur some errors with non-zero probability. More precisely, the Type-I error is given by $\prob(Z_i=0\mid (\M,\assign^\prime)\not\models\phi_i)$ and Type-II error is given by $\prob(Z_i=1\mid (\M,\assign^\prime)\models\phi_i)$.
 \begin{algorithm}[H]
	\begin{algorithmic}[1]
	\renewcommand{\algorithmicrequire}{\textbf{Input:}}
    \renewcommand{\algorithmicensure}{\textbf{Output:}}
		\Require $\M$: $\dtmc$, $\psi = \pred_D(\syntaxpr^{\vecpi_1}(\phi_1),\dots,\syntaxpr^{\vecpi_n}(\phi_n))$: (possibly) nested $\hyppctl$ formula, $\assign$: path assignment, $a,b$: parameters for $\Beta$ prior, $\alpha,\beta$: bounds on Type-I and Type-II errors
		\Ensure Answer if $\M,\assign\models\psi$ with confidence $\alpha,\beta$
		\sd{
		\If{$\psi$ is non-probabilistic}
		    \State Compute $\alpha,\beta$ from subformulae (Property \ref{error_propagate})
		    \If{$(\M,\assign)\models\psi$}
		        \State Return True
		    \Else
		        \State Return False
		    \EndIf
		\EndIf
		}
		\If{$\psi$ is non-nested}
		    \State Return BaseBayes$(\M,\psi,\assign,a,b,\alpha,\beta)$
		\EndIf
		\State $N\gets 1$
		\While{True}
		    \State /*Generate data $\bar{\sample{z}}=(\sample{z_1},\dots,\sample{z_N})$ for $\bar{\sample{Z}}$*/
		    \For{$i = 1$ to $n$}
		        \State \sd{/*Recursively compute error bounds for $\phi_i$ 
		        \Statex $\ \ \ \ \ \ \ \ $
		        by Property \ref{error_propagate}*/}
		        \State $\alpha_i^\prime \gets$ Type-I error bound of $\phi_i$
		        \State $\beta_i^\prime \gets$ Type-II error bound of $\phi_i$
		       \For{$k=1$ to $N$} 
		        \State Randomly sample $\vecpath_i$ from $\ipath{\assign,\vecpi_i}$
		        \State $\assign^\prime\gets \assign[\vecpi_i\rightarrow\vecpath_i]$
		        \If{BayesSMC$(\M,\phi_i,\assign^\prime,a,b,\alpha^\prime_i,\beta^\prime_i)$}
		            \State $\sample{z_k}[i]\gets 1$
		        \Else
		            \State $\sample{z_k}[i]\gets 0$
		        \EndIf
		       \EndFor
		    \EndFor
		    \State $\delta\gets \max_{i=1}^n\{ \max\{\alpha_i^\prime,\beta_i^\prime\}\}$
		    \State Calculate constants $r_1,r_2$ from Theorem \ref{approx_bayes}
			\State Calculate $\bayes_{\bar{\sample{z}}}(\bar{\sample{z}},\extset{D}{\delta})$ and $\bayes_{\bar{\sample{z}}}(\bar{\sample{z}},\deductset{D}{\delta})$ (by Equation \ref{bayes_fact})
			\If{$\bayes_{\bar{\sample{z}}}(\bar{\sample{z}},\deductset{D}{\delta}) \geq 1/({\beta\cdot r_2})$}
			\State Return True
			\ElsIf{$\bayes_{\bar{\sample{z}}}(\bar{\sample{z}},\extset{D}{\delta})\leq \alpha\cdot r_1$}
			\State Return False
			\ElsIf{$\bayes_{\bar{\sample{z}}}(\bar{\sample{z}},\extset{D}{\delta})\geq 1/({\beta\cdot r_2})$}
			\If{$\bayes_{\bar{\sample{z}}}(\bar{\sample{z}},\deductset{D}{\delta})\leq \alpha\cdot r_1$}
			    \State Return Undecided
			\EndIf
			\Else
			\State $N\gets 2\cdot N$
			\EndIf
		\EndWhile
	\end{algorithmic}
	\caption{BayesSMC: $\smc$ of (possibly) nested $\hyppctl$ formula}
	\label{nest}
\end{algorithm} %I changed here
%  Here, we assume that parameters of $Z_i$s have the same prior as parameters of $Y_i$s. 
%  \sd{However,} in this process we incur some errors with non-zero probability. More precisely, the Type-I error is given by $\prob(Z_i=0\mid (\M,\assign^\prime)\not\models\phi_i)$ and Type-II error is given by $\prob(Z_i=1\mid (\M,\assign^\prime)\models\phi_i)$. 
 Observe that, Type-I error is exactly equal to $\prob(Z_i=0\mid Y_i=1)$ and Type-II error is exactly equal to $\prob(Z_i=1\mid Y_i=0)$. Since by inductive hypothesis, Type-I and Type-II errors of BayesSMC$(\M,\phi_i,\assign^\prime,a,b,\alpha_i^\prime,\beta_i^\prime)$ are bounded by $\alpha_i^\prime$ and $\beta_i^\prime$ respectively, we have,
\begin{align*}
    &\prob(Z_i=0\mid Y_i=1)\leq\alpha^\prime_i\quad\text{and}\nonumber\\ &\prob(Z_i=1\mid 
	Y_i=0)\leq\beta^\prime_i.
\end{align*}
Thus we can say, the distributions of $\bar{\sample{Y}}=(Y_1,\dots,Y_n)$ and $\bar{\sample{Z}}=(Z_1,\dots,Z_n)$ are ``close" by Equation \ref{corYZ}. We can now apply Theorem \ref{approx_bayes} (approximate Bayes' test) to devise the recursive algorithm BayesSMC for verifying a (possibly) nested formula $\psi$ on a $\dtmc$ $\M$.

%\input{algo2}  (I changed here)
%\newpage
The correctness of Algorithm \ref{nest} follows directly from Theorem \ref{approx_bayes} (approximate Bayes' test).
\begin{theorem}\label{correct_smc}
[Correctness of BayesSMC]
    Let $\M$, $\psi$, $\assign$, $a$, $b$, $\alpha$ and $\beta$, be as in Algorithm \ref{nest}. 
    If $(\M,\assign)\models \psi$, then Algorithm
    \ref{nest} outputs True with probability at least $(1 -
    \alpha)$.
     If $(\M,\assign)\not\models\psi$, then Algorithm
    \ref{nest} outputs False with probability at least $(1 - \beta)$.
\end{theorem}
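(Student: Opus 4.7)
The plan is to prove Theorem~\ref{correct_smc} by strong induction on the nesting depth of probabilistic operators in $\psi$, reducing each inductive step to the approximate Bayes' test (Theorem~\ref{approx_bayes}), exactly as the algorithm is structured.

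For the base case, $\psi$ has nesting depth one, i.e.\ $\psi$ is non-nested. Then Algorithm~\ref{nest} immediately delegates to \textsf{BaseBayes}, and the conclusion follows directly from Theorem~\ref{correct_base}. (If $\psi$ is non-probabilistic the answer is exact, so both error bounds trivially hold.)

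For the inductive step, assume the claim for all formulae of smaller nesting depth, and let $\psi=\pred_D(\syntaxpr^{\vecpi_1}(\phi_1),\dots,\syntaxpr^{\vecpi_n}(\phi_n))$ be nested. First I would set up the two random vectors used in the analysis: the ``ideal'' vector $\bar{\sample{Y}}=(Y_1,\dots,Y_n)$ with $Y_i$ the Bernoulli indicator of $(\M,\assign[\vecpi_i\!\to\!\vecpath_i])\models\phi_i$ (so $\bar{\sample{\Theta}}[i]=\semantics{\syntaxpr^{\vecpi_i}(\phi_i)}{\assign}$ and the hypothesis $\nullhyp:\bar{\sample{\Theta}}\in D$ is logically equivalent to $(\M,\assign)\models\psi$), and the ``approximate'' vector $\bar{\sample{Z}}=(Z_1,\dots,Z_n)$ actually sampled by the algorithm via recursive \textsf{BayesSMC} calls. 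Next I would verify the hypotheses of Theorem~\ref{approx_bayes}. For each $i$, the recursive call returns the correct answer on input $\phi_i$ with Type-I error at most $\alpha_i'$ and Type-II error at most $\beta_i'$ by the inductive hypothesis, where $\alpha_i',\beta_i'$ are precisely the error bounds computed for $\phi_i$ from its subformulae via the propagation rules in Property~\ref{error_propagate}. Unfolding the definition of $Z_i$ gives exactly $\prob(Z_i=0\mid Y_i=1)\le\alpha_i'$ and $\prob(Z_i=1\mid Y_i=0)\le\beta_i'$, which is Equation~\ref{corYZ}. Applying Proposition~\ref{cor_relate_theta} then yields $\norm{\bar{\sample{\Theta}}-\bar{\sample{\Theta}}'}{\infty}\le\delta$ with $\delta=\max_i\max\{\alpha_i',\beta_i'\}$, which is exactly the $\delta$ the algorithm uses.

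With this, the approximate Bayes' test applies verbatim: using samples from $\bar{\sample{Z}}$, acceptance when $\bayes_{\bar{\sample{Z}}}(\bar{\sample{z}},\deductset{D}{\delta})\ge 1/(\beta r_2)$ guarantees Type-II error at most $\beta$, and rejection when $\bayes_{\bar{\sample{Z}}}(\bar{\sample{z}},\extset{D}{\delta})\le\alpha r_1$ guarantees Type-I error at most $\alpha$, where $r_1,r_2$ are computed from the known prior. Since the algorithm doubles the sample size and loops until one of the two stopping conditions is met (the ``Undecided'' branch only triggers when $\deductset{D}{\delta}=\emptyset$, ruled out by the admissibility side-condition in Property~\ref{error_propagate}), whenever a decision is returned it respects these error bounds. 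Reinterpreting: if $(\M,\assign)\models\psi$ then $\bar{\sample{\Theta}}\in D$, so \textsf{BayesSMC} returns True with probability at least $1-\alpha$; if $(\M,\assign)\not\models\psi$ then $\bar{\sample{\Theta}}\in\complmnt{D}$, so it returns False with probability at least $1-\beta$.

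The main obstacle I anticipate is the bookkeeping step that converts the inductive hypothesis (error bounds on the Boolean outcome of \textsf{BayesSMC}$(\M,\phi_i,\assign',\cdot)$) into the distributional bounds of Equation~\ref{corYZ} on $(Z_i,Y_i)$, and doing so uniformly over the random choice of $\vecpath_i$ so that Proposition~\ref{cor_relate_theta} may be invoked componentwise. A secondary subtlety is justifying that the $\alpha_i',\beta_i'$ produced by the recursive unfolding of Property~\ref{error_propagate} through temporal operators inside $\phi_i$ are themselves valid Type-I/II bounds; this is essentially the content of rules~(1)--(5) of Property~\ref{error_propagate} and can be argued by a sub-induction on the non-probabilistic skeleton of $\phi_i$, exactly as in \cite{lal2020bayesian}. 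Once these two pieces are in place, the rest of the argument is a clean instantiation of Theorem~\ref{approx_bayes}.
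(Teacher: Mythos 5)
Your proposal is correct and follows essentially the same route as the paper: the paper likewise defines the approximate Bernoulli vector $\bar{\sample{Z}}$ via the recursive BayesSMC calls, uses the inductively guaranteed error bounds to obtain Equation \ref{corYZ}, applies Proposition \ref{cor_relate_theta} to get $\norm{\bar{\sample{\Theta}}-\bar{\sample{\Theta}}^\prime}{\infty}\leq\delta$, and concludes via the approximate Bayes' test (Theorem \ref{approx_bayes}), with the non-nested base case handled by Theorem \ref{correct_base}. The only minor inaccuracy is your claim that the Undecided branch triggers only when $\deductset{D}{\delta}=\emptyset$ — it actually corresponds to the Bayes' factors landing in the indifference region $\extset{D}{\delta}\setminus\deductset{D}{\delta}$ — but the paper's own remark acknowledges and glosses over this point in the same way.
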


\sd{
\paragraph*{Comparison with SPRT based SMC}
Let us compare our approach with $\sprt$ based $\smc$ \cite{wang2021statistical}. 
In $\sprt$ based $\smc$, for non-nested probabilistic formula, the verification problem is mapped to an equivalent $n$-dimensional hypothesis testing problem and solved by $\sprt$ based hypothesis testing (Section \ref{sec: SPRT_hyp_test}); which is similar to our approach. On the other hand, for nested formula, $\sprt$ based $\smc$ uses verification results of subformulae directly while verifying the main formula. Thus, the total error incurred depends on the number of samples required to verify the main formula, which is not true for Bayesian $\smc$. Note that, both approaches provide Type-I and Type-II confidences for verification of probabilistic formulae.

}

\section{Experimental Evaluation}\label{exp}
We evaluated our approach on the grid world robot navigation system discussed in section \ref{grid}. 
We consider $n\times n$ grids with two robots, for varying grid sizes $n$.
The robots start from diagonally opposite cells, and their respective goals are to reach the horizontally opposite cells starting from their initial positions. 
We consider the collision avoidance property specified by the non-nested Formula $\foravoid$ (Equation \ref{col_avoid}) and the collision free goal reaching property for the first robot specified by the nested Formula $\forreach$ (Equation \ref{gol_reach}).

We have implemented our algorithm in the Python tool box $\toolname$.
%\begin{comment}
The recursive algorithm proceeds in a bottom-up fashion, where we need to check satisfiability of the subformulae and 
work our way up. 
However, we do not a priori know all the assignments on which the subformulae need to be evaluated, hence, a bottom-up approach would be expensive if we were to compute the satisfiability of the subformulae on all possible assignments.
Instead, we have implemented an equivalent top-down algorithm where we start from the top and work our way down and evaluate the subformulae on only those assignments that are propagated down from the samples for the top-level formula.
For the $\pred$ operator, we only consider box constraints. Thus, integral can be evaluated on each dimension using the incomplete beta function and multiplied, to obtain the $n$-dimensional integrals required for calculating the Bayes' factor.
%\end{comment}
We compare our Bayesian approach with our own implementation of the Frequentist approach based on the $\sprt$ method \cite{wang2021statistical}.
Note that, there are no publicly available probabilistic model checkers for $\hyppctl$ for us to compare with.

% For fair comparison, we have only considered uniform prior for the Bayesian approach. 
Our verification results are summarized in Table \ref{col_avoid_table} and Table \ref{gol_reach_table} for Formula $\foravoid$ (Equation \ref{col_avoid}) and Formula $\forreach$ (Equation \ref{gol_reach}), respectively, wherein we report the verification time in seconds and number of samples required for deduction of the satisfiability of the topmost probabilistic formula. 
Note that, the reported time and number of samples are the average values over multiple (50) runs of the same experiment.
All experiments are performed on a machine having macOS Big Sur with Quad-Core Intel Core i7 2.8GHz$\times$ 1 Processor and 16GB RAM.
We run each formula for at most $30$ minutes and terminate the model-checker if it cannot provide a decision by that time.

Table \ref{diff_prior} compares the Bayesian approach for different priors which are obtained by instantiating the $\Beta$ distribution parameters $a$ and $b$. We used the uniform ($a=b=1$), a left-skewed ($a=5,b=2$), a right-skewed ($a=2,b=5$) and a bell-shaped ($a=b=2$) distribution as $\Beta$ priors. 
For $\alpha=\beta=0.01$, we verified the nested Formula $\forreach$ (Equation \ref{gol_reach}) describing collision free goal reaching for the first robot. We used different values of $n$, while keeping $K=8$, $\theta_1=0.5$ and $\theta_2=0.5$ fixed. 
Also for $n=4,8$, we moved the goal of the first robot to grid position $(1,1)$, so that the formula is satisfied.
We observe that, the choice of prior affects the verification time and number of samples required by the topmost formula (averaged over $50$ runs) in a minor manner. 
We use uniform prior for further comparison with $\sprt$, so that all parameter values are equally probable. 
Note that, a non-null indifference region always exists for $\sprt$ (measured by the parameter $\epsilon$) and like Bayesian, it also provides Type-I and Type-II guarantees for the correctness of a verification result \cite{wang2021statistical}.
% Hence, we choose the simplest prior, that is, uniform prior for further comparison with $\sprt$.
% For $\sprt$, $\epsilon$ is the parameter for the \emph{indifference region}, which we choose to be $0.01$ or $0.001$ for our experiments.

% Please add the following required packages to your document preamble:
% \usepackage{graphicx}
\begin{table*}[]
\resizebox{\textwidth}{!}{%
\begin{tabular}{|c|cc|cc|cc|cc|c|}
\hline
\multicolumn{1}{|l|}{} &
  \multicolumn{2}{c|}{Uniform prior} &
  \multicolumn{2}{c|}{Left-skewed prior} &
  \multicolumn{2}{c|}{Right-skewed prior} &
  \multicolumn{2}{c|}{Bell-shaped prior} &
  \multicolumn{1}{l|}{} \\ \hline
n &
  \multicolumn{1}{c|}{Samples} &
  Time &
  \multicolumn{1}{c|}{Samples} &
  Time &
  \multicolumn{1}{c|}{Samples} &
  Time &
  \multicolumn{1}{c|}{Samples} &
  Time &
  Status \\ \hline\hline
$4$ &
  \multicolumn{1}{c|}{$40.64$} &
  $0.893$ &
  \multicolumn{1}{c|}{$63.04$} &
  $1.371$ &
  \multicolumn{1}{c|}{$46.08$} &
  $0.985$ &
  \multicolumn{1}{c|}{$62.08$} &
  $1.342$ &
  TRUE \\ \hline
$6$ &
  \multicolumn{1}{c|}{$8.96$} &
  $0.713$ &
  \multicolumn{1}{c|}{$16.0$} &
  $1.254$ &
  \multicolumn{1}{c|}{$9.44$} &
  $0.757$ &
  \multicolumn{1}{c|}{$8.96$} &
  $0.731$ &
  FALSE \\ \hline
$8$ &
  \multicolumn{1}{c|}{$121.6$} &
  $8.933$ &
  \multicolumn{1}{c|}{$207.68$} &
  $15.401$ &
  \multicolumn{1}{c|}{$124.8$} &
  $9.354$ &
  \multicolumn{1}{c|}{$170.24$} &
  $12.573$ &
  TRUE \\ \hline
$10$ &
  \multicolumn{1}{c|}{$8.0$} &
  $1.889$ &
  \multicolumn{1}{c|}{$16.0$} &
  $3.311$ &
  \multicolumn{1}{c|}{$8.0$} &
  $1.807$ &
  \multicolumn{1}{c|}{$8.0$} &
  $1.788$ &
  FALSE \\ \hline
\end{tabular}%
}
\caption{Performance of HyProVer for different Beta priors}
\label{diff_prior}
\end{table*}

\begin{table*}[]
	\resizebox{\textwidth}{!}{%
		\begin{tabular}{|c|c|c|cc|cc|cc|cc|}
			\hline
			\multirow{2}{*}{n} &
			\multirow{2}{*}{($\alpha$, $\beta$)} &
			\multirow{2}{*}{K} &
			\multicolumn{2}{c|}{$\toolname$} &
			\multicolumn{2}{c|}{SPRT ($\epsilon = 0.01$)} &
			\multicolumn{2}{c|}{SPRT ($\epsilon = 0.001$)} &
			\multicolumn{2}{c|}{Status} \\ \cline{4-11} 
			&
			&
			&
			\multicolumn{1}{c|}{Samples} &
			Time &
			\multicolumn{1}{c|}{Samples} &
			Time &
			\multicolumn{1}{c|}{Samples} &
			Time &
			\multicolumn{1}{c|}{$\toolname$} &
			SPRT \\ \hline\hline
			4 &
			$0.01$, $0.01$ &
			$3$ &
			\multicolumn{1}{c|}{$19.68$} &
			$0.019$ &
			\multicolumn{1}{c|}{$256.0$} &
			$0.147$ &
			\multicolumn{1}{c|}{$2048.0$} &
			$0.994$ &
			\multicolumn{1}{c|}{TRUE} &
			TRUE \\ \hline
			$6$ &
			$0.01$, $0.01$ &
			$3$ &
			\multicolumn{1}{c|}{$8.0$} &
			$0.049$ &
			\multicolumn{1}{c|}{-} &
			$>1800$ &
			\multicolumn{1}{c|}{-} &
			$>1800$ &
			\multicolumn{1}{c|}{TRUE} &
			UNDECIDED \\ \hline
			$8$ &
			$0.01$, $0.01$ &
			$3$ &
			\multicolumn{1}{c|}{$8.0$} &
			$0.136$ &
			\multicolumn{1}{c|}{-} &
			$>1800$ &
			\multicolumn{1}{c|}{-} &
			$>1800$ &
			\multicolumn{1}{c|}{TRUE} &
			UNDECIDED \\ \hline
			$10$ &
			$0.01$, $0.01$ &
			$3$ &
			\multicolumn{1}{c|}{$8.0$} &
			$0.309$ &
			\multicolumn{1}{c|}{-} &
			$>1800$ &
			\multicolumn{1}{c|}{-} &
			$>1800$ &
			\multicolumn{1}{c|}{TRUE} &
			UNDECIDED \\ \hline
			$4$ &
			$0.001$, $0.001$ &
			$3$ &
			\multicolumn{1}{c|}{$33.92$} &
			$0.027$ &
			\multicolumn{1}{c|}{$471.04$} &
			$0.253$ &
			\multicolumn{1}{c|}{$4096.0$} &
			$1.979$ &
			\multicolumn{1}{c|}{TRUE} &
			TRUE \\ \hline
			$6$ &
			$0.001$, $0.001$ &
			$3$ &
			\multicolumn{1}{c|}{$16.0$} &
			$0.059$ &
			\multicolumn{1}{c|}{-} &
			$>1800$ &
			\multicolumn{1}{c|}{-} &
			$>1800$ &
			\multicolumn{1}{c|}{TRUE} &
			UNDECIDED \\ \hline
			$8$ &
			$0.001$, $0.001$ &
			$3$ &
			\multicolumn{1}{c|}{$16.0$} &
			$0.151$ &
			\multicolumn{1}{c|}{-} &
			$>1800$ &
			\multicolumn{1}{c|}{-} &
			$>1800$ &
			\multicolumn{1}{c|}{TRUE} &
			UNDECIDED \\ \hline
			$10$ &
			$0.001$, $0.001$ &
			$3$ &
			\multicolumn{1}{c|}{$16.0$} &
			$0.329$ &
			\multicolumn{1}{c|}{-} &
			$>1800$ &
			\multicolumn{1}{c|}{-} &
			$>1800$ &
			\multicolumn{1}{c|}{TRUE} &
			UNDECIDED \\ \hline
			$4$ &
			$0.01$, $0.01$ &
			$8$ &
			\multicolumn{1}{c|}{$3817.6$} &
			$3.270$ &
			\multicolumn{1}{c|}{$3604.48$} &
			$3.784$ &
			\multicolumn{1}{c|}{$38666.24$} &
			$33.261$ &
			\multicolumn{1}{c|}{FALSE} &
			FALSE \\ \hline
			$6$ &
			$0.01$, $0.01$ &
			$8$ &
			\multicolumn{1}{c|}{$33.12$} &
			$0.117$ &
			\multicolumn{1}{c|}{$348.16$} &
			$0.849$ &
			\multicolumn{1}{c|}{$4096.0$} &
			$8.887$ &
			\multicolumn{1}{c|}{TRUE} &
			TRUE \\ \hline
			$8$ &
			$0.01$, $0.01$ &
			$8$ &
			\multicolumn{1}{c|}{$11.84$} &
			$0.169$ &
			\multicolumn{1}{c|}{$225.28$} &
			$2.054$ &
			\multicolumn{1}{c|}{$2048.0$} &
			$9.131$ &
			\multicolumn{1}{c|}{TRUE} &
			TRUE \\ \hline
			$10$ &
			$0.01$, $0.01$ &
			$8$ &
			\multicolumn{1}{c|}{$8.0$} &
			$0.336$ &
			\multicolumn{1}{c|}{-} &
			$>1800$ &
			\multicolumn{1}{c|}{-} &
			$>1800$ &
			\multicolumn{1}{c|}{TRUE} &
			UNDECIDED \\ \hline
			$4$ &
			$0.001$, $0.001$ &
			$8$ &
			\multicolumn{1}{c|}{$8785.92$} &
			$7.452$ &
			\multicolumn{1}{c|}{$6144.0$} &
			$6.203$ &
			\multicolumn{1}{c|}{$64880.64$} &
			$55.780$ &
			\multicolumn{1}{c|}{FALSE} &
			FALSE \\ \hline
			$6$ &
			$0.001$, $0.001$ &
			$8$ &
			\multicolumn{1}{c|}{$56.64$} &
			$0.164$ &
			\multicolumn{1}{c|}{$512.0$} &
			$1.195$ &
			\multicolumn{1}{c|}{$4096.0$} &
			$8.889$ &
			\multicolumn{1}{c|}{TRUE} &
			TRUE \\ \hline
			$8$ &
			$0.001$, $0.001$ &
			$8$ &
			\multicolumn{1}{c|}{$21.12$} &
			$0.204$ &
			\multicolumn{1}{c|}{$256.0$} &
			$2.310$ &
			\multicolumn{1}{c|}{$2048.0$} &
			$8.916$ &
			\multicolumn{1}{c|}{TRUE} &
			TRUE \\ \hline
			$10$ &
			$0.001$, $0.001$ &
			$8$ &
			\multicolumn{1}{c|}{$16.0$} &
			$0.382$ &
			\multicolumn{1}{c|}{-} &
			$>1800$ &
			\multicolumn{1}{c|}{-} &
			$>1800$ &
			\multicolumn{1}{c|}{TRUE} &
			UNDECIDED \\ \hline
		\end{tabular}%
	}
	\caption{SMC of collision avoidance formula}
	\label{col_avoid_table}
\end{table*}

The collision avoidance formula $\foravoid$ (Equation \ref{col_avoid}) depends on $3$ parameters: $n$ (grid size), $K$ (bound for the until operator) and $\theta$ (probability threshold). In Table \ref{col_avoid_table}, we varied $n$ and $K$ for different error bounds $\alpha, \beta$ and kept $\theta=0.5$ fixed, as threshold value has little bearing on the verification time and required number of samples (from our observations as well as existing work on Bayesian SMC for Continuous Stochastic Logic \cite{lal2020bayesian}). We would like to note two main observations from Table \ref{col_avoid_table}:
\begin{enumerate}
    \item For non-nested formula, $\sprt$ could not decide within time limit ($30$ minutes) whether collision probability lies below the threshold $\theta$ when collision probability was exactly $0$. This is because we can only separate $\nullhyp$ from $\althyp$ using $\sprt$ when $\bar{\sample{\Theta}}$ does not lie in the \emph{indifference region} (see \cite{wang2021statistical}) and that is not true here. If the test region is $D=[0,\theta]$, then the indifference region, however small it might be, will always contain $0$. Bayesian method does not have this problem, and it was able to provide inference in all the cases where $\sprt$ failed. This shows a benefit of the Bayesian approach.
    % shortcoming of the $\sprt$ approach.
    
    \item In those cases where both methods provided inference, Bayesian approach examined fewer samples and terminated in shorter time in most of the cases as compared to the $\sprt$ approach. 
    % In only one case, the two methods required equal number of samples. 
    This shows Bayesian approach is much more scalable than $\sprt$ even for non-nested formulae. Note that, the inference provided by the two approaches always agree.
\end{enumerate}

% Please add the following required packages to your document preamble:
% \usepackage{multirow}
% \usepackage{graphicx}
\begin{table*}[]
\resizebox{\textwidth}{!}{%
\begin{tabular}{|c|c|c|cc|cc|cc|cc|}
\hline
\multirow{2}{*}{n} &
  \multirow{2}{*}{($\alpha$, $\beta$)} &
  \multirow{2}{*}{K} &
  \multicolumn{2}{c|}{$\toolname$} &
  \multicolumn{2}{c|}{SPRT ($\epsilon = 0.01$)} &
  \multicolumn{2}{c|}{SPRT ($\epsilon = 0.001$)} &
  \multicolumn{2}{c|}{Status} \\ \cline{4-11} 
 &
   &
   &
  \multicolumn{1}{c|}{Samples} &
  Time &
  \multicolumn{1}{c|}{Samples} &
  Time &
  \multicolumn{1}{c|}{Samples} &
  Time &
  \multicolumn{1}{c|}{$\toolname$} &
  SPRT \\ \hline\hline
$4$ &
  $0.01$, $0.01$ &
  $3$ &
  \multicolumn{1}{c|}{$40.32$} &
  $0.216$ &
  \multicolumn{1}{c|}{-} &
  $>1800$ &
  \multicolumn{1}{c|}{-} &
  $>1800$ &
  \multicolumn{1}{c|}{TRUE} &
  UNDECIDED \\ \hline
$6$ &
  $0.01$, $0.01$ &
  $3$ &
  \multicolumn{1}{c|}{$16.0$} &
  $0.278$ &
  \multicolumn{1}{c|}{-} &
  $>1800$ &
  \multicolumn{1}{c|}{-} &
  $>1800$ &
  \multicolumn{1}{c|}{FALSE} &
  UNDECIDED \\ \hline
$8$ &
  $0.01$, $0.01$ &
  $3$ &
  \multicolumn{1}{c|}{$27.68$} &
  $0.604$ &
  \multicolumn{1}{c|}{-} &
  $>1800$ &
  \multicolumn{1}{c|}{-} &
  $>1800$ &
  \multicolumn{1}{c|}{TRUE} &
  UNDECIDED \\ \hline
$10$ &
  $0.01$, $0.01$ &
  $3$ &
  \multicolumn{1}{c|}{$16.0$} &
  $0.951$ &
  \multicolumn{1}{c|}{-} &
  $>1800$ &
  \multicolumn{1}{c|}{-} &
  $>1800$ &
  \multicolumn{1}{c|}{FALSE} &
  UNDECIDED \\ \hline
$4$ &
  $0.001$, $0.001$ &
  $3$ &
  \multicolumn{1}{c|}{$57.28$} &
  $0.580$ &
  \multicolumn{1}{c|}{-} &
  $>1800$ &
  \multicolumn{1}{c|}{-} &
  $>1800$ &
  \multicolumn{1}{c|}{TRUE} &
  UNDECIDED \\ \hline
$6$ &
  $0.001$, $0.001$ &
  $3$ &
  \multicolumn{1}{c|}{$32.0$} &
  $0.935$ &
  \multicolumn{1}{c|}{-} &
  $>1800$ &
  \multicolumn{1}{c|}{-} &
  $>1800$ &
  \multicolumn{1}{c|}{FALSE} &
  UNDECIDED \\ \hline
$8$ &
  $0.001$, $0.001$ &
  $3$ &
  \multicolumn{1}{c|}{$62.72$} &
  $2.096$ &
  \multicolumn{1}{c|}{-} &
  $>1800$ &
  \multicolumn{1}{c|}{-} &
  $>1800$ &
  \multicolumn{1}{c|}{TRUE} &
  UNDECIDED \\ \hline
$10$ &
  $0.001$, $0.001$ &
  $3$ &
  \multicolumn{1}{c|}{$32.0$} &
  $2.593$ &
  \multicolumn{1}{c|}{-} &
  $>1800$ &
  \multicolumn{1}{c|}{-} &
  $>1800$ &
  \multicolumn{1}{c|}{FALSE} &
  UNDECIDED \\ \hline
$4$ &
  $0.01$, $0.01$ &
  $8$ &
  \multicolumn{1}{c|}{$34.24$} &
  $0.730$ &
  \multicolumn{1}{c|}{-} &
  $>1800$ &
  \multicolumn{1}{c|}{-} &
  $>1800$ &
  \multicolumn{1}{c|}{TRUE} &
  UNDECIDED \\ \hline
$6$ &
  $0.01$, $0.01$ &
  $8$ &
  \multicolumn{1}{c|}{$18.56$} &
  $1.439$ &
  \multicolumn{1}{c|}{-} &
  $>1800$ &
  \multicolumn{1}{c|}{-} &
  $>1800$ &
  \multicolumn{1}{c|}{FALSE} &
  UNDECIDED \\ \hline
$8$ &
  $0.01$, $0.01$ &
  $8$ &
  \multicolumn{1}{c|}{$33.76$} &
  $2.386$ &
  \multicolumn{1}{c|}{-} &
  $>1800$ &
  \multicolumn{1}{c|}{-} &
  $>1800$ &
  \multicolumn{1}{c|}{TRUE} &
  UNDECIDED \\ \hline
$10$ &
  $0.01$, $0.01$ &
  $8$ &
  \multicolumn{1}{c|}{$16.0$} &
  $3.318$ &
  \multicolumn{1}{c|}{-} &
  $>1800$ &
  \multicolumn{1}{c|}{-} &
  $>1800$ &
  \multicolumn{1}{c|}{FALSE} &
  UNDECIDED \\ \hline
$4$ &
  $0.001$, $0.001$ &
  $8$ &
  \multicolumn{1}{c|}{$53.76$} &
  $1.125$ &
  \multicolumn{1}{c|}{-} &
  $>1800$ &
  \multicolumn{1}{c|}{-} &
  $>1800$ &
  \multicolumn{1}{c|}{TRUE} &
  UNDECIDED \\ \hline
$6$ &
  $0.001$, $0.001$ &
  $8$ &
  \multicolumn{1}{c|}{$33.92$} &
  $2.646$ &
  \multicolumn{1}{c|}{-} &
  $>1800$ &
  \multicolumn{1}{c|}{-} &
  $>1800$ &
  \multicolumn{1}{c|}{FALSE} &
  UNDECIDED \\ \hline
$8$ &
  $0.001$, $0.001$ &
  $8$ &
  \multicolumn{1}{c|}{$53.44$} &
  $3.730$ &
  \multicolumn{1}{c|}{-} &
  $>1800$ &
  \multicolumn{1}{c|}{-} &
  $>1800$ &
  \multicolumn{1}{c|}{TRUE} &
  UNDECIDED \\ \hline
$10$ &
  $0.001$, $0.001$ &
  $8$ &
  \multicolumn{1}{c|}{$32.0$} &
  $6.316$ &
  \multicolumn{1}{c|}{-} &
  $>1800$ &
  \multicolumn{1}{c|}{-} &
  $>1800$ &
  \multicolumn{1}{c|}{FALSE} &
  UNDECIDED \\ \hline
\end{tabular}%
}
\caption{SMC of collision free goal reaching formula for first robot}
\label{gol_reach_table}
\end{table*}

The collision free goal reaching formula for the first robot, $\forreach$ (Equation \ref{gol_reach}), depends on $4$ parameters: $n$ (grid size), $K$ (bound for until operator), $\theta_1$ (probability threshold for the topmost probabilistic formula) and $\theta_2$ (probability threshold for the probabilistic subformula $\psi_{nocol}$). In Table \ref{gol_reach_table}, we varied $n$ and $K$ for different error bounds $\alpha,\beta$ and kept $\theta_1=0.3$ and $\theta_2=0.5$ fixed for the same reasons as mentioned before. 
Also for $n=4,8$, we moved the goal of the first robot to grid position $(0,1)$, so that the formula is satisfied.
We see that, the performance of $\sprt$ is much worse for nested probabilistic formula as the verification never completes within the stipulated time limit (30 minutes) for any $n$, $K$ and $(\alpha,\beta)$. 
The derogatory performance for the nested case is expected, since, the verification time and the corresponding number of samples grow exponentially with the nesting depth.
Note that, $\sprt$ was not tested on any nested $\hyppctl$ formula in \cite{wang2021statistical} as well. Thus, our evaluation demonstrates that Bayesian approach is superior to the Frequentist SMC for verification of hyperproperties and scales reasonably well for nested formulae as well. 
\section{conclusion}\label{conc}
In this paper, we have developed a recursive statistical model checking algorithm for verifying discrete-time probabilistic hyperproperties on discrete-time Markov chains. Our broad approach consisted of mapping the $\hyppctl$ verification problem to an $n$-dimensional hypotheses testing problem. We designed an algorithm based on random sampling followed by Bayes' test for non-nested $\hyppctl$ formula, and extended it to the nested cases through a recursive algorithm that exploits an approximate Bayes' test. Finally, we used our algorithm to verify probabilistic hyperproperties like collision avoidance and collision free goal reaching on the grid world robot navigation scenarios. We compared the performance of our algorithm against the $\sprt$ based algorithm discussed in \cite{wang2021statistical} and showed that our algorithm performs better both in verification time and number of required samples for inference; a stark difference arises when we consider nested formulae.

For future work, we would like to develop a verification algorithm based on Bayes' test for hyperproperties over continuous time. Another interesting research direction would be to incorporate unbounded temporal (until) operators. This would enable the verification of unbounded probabilistic hyperproperties using light-weight methods such as Bayesian $\smc$.

%\begin{comment}
\section*{Acknowledgment}
This work was partially supported by NSF CAREER Grant No. 1552668 and NSF Grant No. 2008957.
%\end{comment}

\bibliography{SMC}

% Generated by IEEEtran.bst, version: 1.14 (2015/08/26)
\begin{thebibliography}{10}
\providecommand{\url}[1]{#1}
\csname url@samestyle\endcsname
\providecommand{\newblock}{\relax}
\providecommand{\bibinfo}[2]{#2}
\providecommand{\BIBentrySTDinterwordspacing}{\spaceskip=0pt\relax}
\providecommand{\BIBentryALTinterwordstretchfactor}{4}
\providecommand{\BIBentryALTinterwordspacing}{\spaceskip=\fontdimen2\font plus
\BIBentryALTinterwordstretchfactor\fontdimen3\font minus
  \fontdimen4\font\relax}
\providecommand{\BIBforeignlanguage}[2]{{%
\expandafter\ifx\csname l@#1\endcsname\relax
\typeout{** WARNING: IEEEtran.bst: No hyphenation pattern has been}%
\typeout{** loaded for the language `#1'. Using the pattern for}%
\typeout{** the default language instead.}%
\else
\language=\csname l@#1\endcsname
\fi
#2}}
\providecommand{\BIBdecl}{\relax}
\BIBdecl

\bibitem{sadigh2014data}
D.~Sadigh, K.~Driggs-Campbell, A.~Puggelli, W.~Li, V.~Shia, R.~Bajcsy,
  A.~Sangiovanni-Vincentelli, S.~S. Sastry, and S.~Seshia, ``Data-driven
  probabilistic modeling and verification of human driver behavior,'' in
  \emph{2014 AAAI Spring Symposium Series}, 2014.

\bibitem{lal2020bayesian}
R.~Lal, W.~Duan, and P.~Prabhakar, ``Bayesian statistical model checking for
  continuous stochastic logic,'' in \emph{2020 18th ACM-IEEE International
  Conference on Formal Methods and Models for System Design (MEMOCODE)}.\hskip
  1em plus 0.5em minus 0.4em\relax IEEE, 2020, pp. 1--11.

\bibitem{clarkson2014temporal}
M.~R. Clarkson, B.~Finkbeiner, M.~Koleini, K.~K. Micinski, M.~N. Rabe, and
  C.~S{\'a}nchez, ``Temporal logics for hyperproperties,'' in
  \emph{International Conference on Principles of Security and Trust}.\hskip
  1em plus 0.5em minus 0.4em\relax Springer, 2014, pp. 265--284.

\bibitem{hsu2021bounded}
T.-H. Hsu, C.~S{\'a}nchez, and B.~Bonakdarpour, ``Bounded model checking for
  hyperproperties,'' in \emph{International Conference on Tools and Algorithms
  for the Construction and Analysis of Systems}.\hskip 1em plus 0.5em minus
  0.4em\relax Springer, 2021, pp. 94--112.

\bibitem{wang2020hyperproperties}
Y.~Wang, S.~Nalluri, and M.~Pajic, ``Hyperproperties for robotics: Planning via
  hyperltl,'' in \emph{2020 IEEE International Conference on Robotics and
  Automation (ICRA)}.\hskip 1em plus 0.5em minus 0.4em\relax IEEE, 2020, pp.
  8462--8468.

\bibitem{agrawal2016runtime}
S.~Agrawal and B.~Bonakdarpour, ``Runtime verification of k-safety
  hyperproperties in hyperltl,'' in \emph{2016 IEEE 29th Computer Security
  Foundations Symposium (CSF)}.\hskip 1em plus 0.5em minus 0.4em\relax IEEE,
  2016, pp. 239--252.

\bibitem{clarkson2010hyperproperties}
M.~R. Clarkson and F.~B. Schneider, ``Hyperproperties,'' \emph{Journal of
  Computer Security}, vol.~18, no.~6, pp. 1157--1210, 2010.

\bibitem{finkbeiner2017monitoring}
B.~Finkbeiner, C.~Hahn, M.~Stenger, and L.~Tentrup, ``Monitoring
  hyperproperties,'' in \emph{International Conference on Runtime
  Verification}.\hskip 1em plus 0.5em minus 0.4em\relax Springer, 2017, pp.
  190--207.

\bibitem{finkbeiner2016deciding}
B.~Finkbeiner and C.~Hahn, ``Deciding hyperproperties,'' \emph{arXiv preprint
  arXiv:1606.07047}, 2016.

\bibitem{wang2021statistical}
Y.~Wang, S.~Nalluri, B.~Bonakdarpour, and M.~Pajic, ``Statistical model
  checking for hyperproperties,'' in \emph{2021 IEEE 34th Computer Security
  Foundations Symposium (CSF)}.\hskip 1em plus 0.5em minus 0.4em\relax IEEE,
  2021, pp. 1--16.

\bibitem{kopf2007information}
B.~K{\"o}pf and D.~Basin, ``An information-theoretic model for adaptive
  side-channel attacks,'' in \emph{Proceedings of the 14th ACM conference on
  Computer and communications security}, 2007, pp. 286--296.

\bibitem{gray1992toward}
J.~W. Gray~III, ``Toward a mathematical foundation for information flow
  security,'' \emph{Journal of Computer Security}, vol.~1, no. 3-4, pp.
  255--294, 1992.

\bibitem{dwork2014algorithmic}
C.~Dwork, A.~Roth \emph{et~al.}, ``The algorithmic foundations of differential
  privacy.'' \emph{Found. Trends Theor. Comput. Sci.}, vol.~9, no. 3-4, pp.
  211--407, 2014.

\bibitem{kwiatkowska2011prism}
M.~Kwiatkowska, G.~Norman, and D.~Parker, ``Prism 4.0: Verification of
  probabilistic real-time systems,'' in \emph{International conference on
  computer aided verification}.\hskip 1em plus 0.5em minus 0.4em\relax
  Springer, 2011, pp. 585--591.

\bibitem{dehnert2017storm}
C.~Dehnert, S.~Junges, J.-P. Katoen, and M.~Volk, ``A storm is coming: A modern
  probabilistic model checker,'' in \emph{International Conference on Computer
  Aided Verification}.\hskip 1em plus 0.5em minus 0.4em\relax Springer, 2017,
  pp. 592--600.

\bibitem{finkbeiner2018model}
B.~Finkbeiner, C.~Hahn, and H.~Torfah, ``Model checking quantitative
  hyperproperties,'' in \emph{International Conference on Computer Aided
  Verification}.\hskip 1em plus 0.5em minus 0.4em\relax Springer, 2018, pp.
  144--163.

\bibitem{wang2019statistical}
Y.~Wang, M.~Zarei, B.~Bonakdarpour, and M.~Pajic, ``Statistical verification of
  hyperproperties for cyber-physical systems,'' \emph{ACM Transactions on
  Embedded Computing Systems (TECS)}, vol.~18, no.~5s, pp. 1--23, 2019.

\bibitem{clarke2008statistical}
E.~Clarke, A.~Donz{\'e}, and A.~Legay, ``Statistical model checking of
  mixed-analog circuits with an application to a third order $\delta$- $\sigma$
  modulator,'' in \emph{Haifa Verification Conference}.\hskip 1em plus 0.5em
  minus 0.4em\relax Springer, 2008, pp. 149--163.

\bibitem{grosu2005monte}
R.~Grosu and S.~A. Smolka, ``Monte carlo model checking,'' in
  \emph{International Conference on Tools and Algorithms for the Construction
  and Analysis of Systems}.\hskip 1em plus 0.5em minus 0.4em\relax Springer,
  2005, pp. 271--286.

\bibitem{merayo2009statistical}
M.~G. Merayo, I.~Hwang, M.~N{\'u}nez, and A.~Cavalli, ``A statistical approach
  to test stochastic and probabilistic systems,'' in \emph{International
  Conference on Formal Engineering Methods}.\hskip 1em plus 0.5em minus
  0.4em\relax Springer, 2009, pp. 186--205.

\bibitem{rabih2009statistical}
D.~E. Rabih and N.~Pekergin, ``Statistical model checking using perfect
  simulation,'' in \emph{International Symposium on Automated Technology for
  Verification and Analysis}.\hskip 1em plus 0.5em minus 0.4em\relax Springer,
  2009, pp. 120--134.

\bibitem{sen2004statistical}
K.~Sen, M.~Viswanathan, and G.~Agha, ``Statistical model checking of black-box
  probabilistic systems,'' in \emph{International Conference on Computer Aided
  Verification}.\hskip 1em plus 0.5em minus 0.4em\relax Springer, 2004, pp.
  202--215.

\bibitem{sen2005statistical}
------, ``On statistical model checking of stochastic systems,'' in
  \emph{International Conference on Computer Aided Verification}.\hskip 1em
  plus 0.5em minus 0.4em\relax Springer, 2005, pp. 266--280.

\bibitem{younes2002probabilistic}
H.~L. Younes and R.~G. Simmons, ``Probabilistic verification of discrete event
  systems using acceptance sampling,'' in \emph{International Conference on
  Computer Aided Verification}.\hskip 1em plus 0.5em minus 0.4em\relax
  Springer, 2002, pp. 223--235.

\bibitem{basu2009approximate}
S.~Basu, A.~P. Ghosh, and R.~He, ``Approximate model checking of pctl involving
  unbounded path properties,'' in \emph{International Conference on Formal
  Engineering Methods}.\hskip 1em plus 0.5em minus 0.4em\relax Springer, 2009,
  pp. 326--346.

\bibitem{cappart2017verification}
Q.~Cappart, C.~Limbr{\'e}e, P.~Schaus, J.~Quilbeuf, L.-M. Traonouez, and
  A.~Legay, ``Verification of interlocking systems using statistical model
  checking,'' in \emph{2017 IEEE 18th International Symposium on High Assurance
  Systems Engineering (HASE)}.\hskip 1em plus 0.5em minus 0.4em\relax IEEE,
  2017, pp. 61--68.

\bibitem{henriques2012statistical}
D.~Henriques, J.~G. Martins, P.~Zuliani, A.~Platzer, and E.~M. Clarke,
  ``Statistical model checking for markov decision processes,'' in \emph{2012
  Ninth international conference on quantitative evaluation of systems}.\hskip
  1em plus 0.5em minus 0.4em\relax IEEE, 2012, pp. 84--93.

\bibitem{zuliani2013bayesian}
P.~Zuliani, A.~Platzer, and E.~M. Clarke, ``Bayesian statistical model checking
  with application to stateflow/simulink verification,'' \emph{Formal Methods
  in System Design}, vol.~43, no.~2, pp. 338--367, 2013.

\bibitem{clarke2018handbook}
E.~M. Clarke, T.~A. Henzinger, H.~Veith, R.~Bloem \emph{et~al.}, \emph{Handbook
  of model checking}.\hskip 1em plus 0.5em minus 0.4em\relax Springer, 2018,
  vol.~10.

\bibitem{baier2008principles}
C.~Baier and J.-P. Katoen, \emph{Principles of model checking}.\hskip 1em plus
  0.5em minus 0.4em\relax MIT press, 2008.

\bibitem{vardi1985automatic}
M.~Y. Vardi, ``Automatic verification of probabilistic concurrent finite state
  programs,'' in \emph{26th Annual Symposium on Foundations of Computer Science
  (SFCS 1985)}.\hskip 1em plus 0.5em minus 0.4em\relax IEEE, 1985, pp.
  327--338.

\bibitem{abate2010approximate}
A.~Abate, J.-P. Katoen, J.~Lygeros, and M.~Prandini, ``Approximate model
  checking of stochastic hybrid systems,'' \emph{European Journal of Control},
  vol.~16, no.~6, pp. 624--641, 2010.

\bibitem{ciesinski2004probabilistic}
F.~Ciesinski and M.~Gr{\"o}{\ss}er, ``On probabilistic computation tree
  logic,'' in \emph{Validation of Stochastic Systems}.\hskip 1em plus 0.5em
  minus 0.4em\relax Springer, 2004, pp. 147--188.

\bibitem{aziz1996verifying}
A.~Aziz, K.~Sanwal, V.~Singhal, and R.~Brayton, ``Verifying continuous time
  markov chains,'' in \emph{International Conference on Computer Aided
  Verification}.\hskip 1em plus 0.5em minus 0.4em\relax Springer, 1996, pp.
  269--276.

\bibitem{baier2003model}
C.~Baier, B.~Haverkort, H.~Hermanns, and J.-P. Katoen, ``Model-checking
  algorithms for continuous-time markov chains,'' \emph{IEEE Transactions on
  software engineering}, vol.~29, no.~6, pp. 524--541, 2003.

\bibitem{bustan2004verifying}
D.~Bustan, S.~Rubin, and M.~Y. Vardi, ``Verifying $\omega$-regular properties
  of markov chains,'' in \emph{International Conference on Computer Aided
  Verification}.\hskip 1em plus 0.5em minus 0.4em\relax Springer, 2004, pp.
  189--201.

\bibitem{hermanns2008probabilistic}
H.~Hermanns, B.~Wachter, and L.~Zhang, ``Probabilistic cegar,'' in
  \emph{International Conference on Computer Aided Verification}.\hskip 1em
  plus 0.5em minus 0.4em\relax Springer, 2008, pp. 162--175.

\bibitem{jha2009bayesian}
S.~K. Jha, E.~M. Clarke, C.~J. Langmead, A.~Legay, A.~Platzer, and P.~Zuliani,
  ``A bayesian approach to model checking biological systems,'' in
  \emph{International conference on computational methods in systems
  biology}.\hskip 1em plus 0.5em minus 0.4em\relax Springer, 2009, pp.
  218--234.

\bibitem{bogdoll2011partial}
J.~Bogdoll, L.~M. Ferrer~Fioriti, A.~Hartmanns, and H.~Hermanns, ``Partial
  order methods for statistical model checking and simulation,'' in
  \emph{Formal Techniques for Distributed Systems}.\hskip 1em plus 0.5em minus
  0.4em\relax Springer, 2011, pp. 59--74.

\bibitem{larsen1991bisimulation}
K.~G. Larsen and A.~Skou, ``Bisimulation through probabilistic testing,''
  \emph{Information and computation}, vol.~94, no.~1, pp. 1--28, 1991.

\bibitem{agha2018survey}
G.~Agha and K.~Palmskog, ``A survey of statistical model checking,'' \emph{ACM
  Transactions on Modeling and Computer Simulation (TOMACS)}, vol.~28, no.~1,
  pp. 1--39, 2018.

\bibitem{legay2010statistical}
A.~Legay, B.~Delahaye, and S.~Bensalem, ``Statistical model checking: An
  overview,'' in \emph{International conference on runtime verification}.\hskip
  1em plus 0.5em minus 0.4em\relax Springer, 2010, pp. 122--135.

\bibitem{barbot2012coupling}
B.~Barbot, S.~Haddad, and C.~Picaronny, ``Coupling and importance sampling for
  statistical model checking,'' in \emph{International Conference on Tools and
  Algorithms for the Construction and Analysis of Systems}.\hskip 1em plus
  0.5em minus 0.4em\relax Springer, 2012, pp. 331--346.

\bibitem{hadjis2015importance}
S.~Hadjis and S.~Ermon, ``Importance sampling over sets: A new probabilistic
  inference scheme.'' in \emph{UAI}, 2015, pp. 355--364.

\bibitem{wald1945sequential}
A.~Wald, ``Sequential tests of statistical hypotheses,'' \emph{The annals of
  mathematical statistics}, vol.~16, no.~2, pp. 117--186, 1945.

\bibitem{peyronnet2006apmc}
S.~Peyronnet, R.~Lassaigne, and T.~Herault, ``Apmc 3.0: Approximate
  verification of discrete and continuous time markov chains,'' in \emph{Third
  International Conference on the Quantitative Evaluation of
  Systems-(QEST'06)}.\hskip 1em plus 0.5em minus 0.4em\relax IEEE, 2006, pp.
  129--130.

\end{thebibliography}
\bibliographystyle{IEEEtran}

\end{document}